\author{Doratossadat Dastgheib\footnote{d$_-$dastgheib@sbu.ac.ir} , Hadi Farahani\footnote{h$_-$farahani@sbu.ac.ir}}
\title{Doxastic \L{ukasiewicz} Logic with Public Announcement}
\newtheorem{Proposition}{\textbf{Proposition}}[section]
\newtheorem{Lemma}{\textbf{Lemma}}[section]
\newtheorem{Theorem}{\textbf{Theorem}}[section]
\newtheorem{Remark}{\textbf{Remark}}[section]
\newenvironment{Definition}{\vskip 0.1cm\textbf{Definition.}}{\hfill$\blacktriangleleft$\vskip 0.1cm}
\declaretheoremstyle[headfont=\bfseries]{normalhead}
\numberwithin{equation}{section}
\begin{document}
	\maketitle
	\begin{abstract}
			In this paper we propose a doxastic extension $\textbf{B\L}^+$ of  \L ukasiewicz logic  which is sound and complete relative to  the introduced corresponding  semantics. Also, we equip  our doxastic \L ukasiewicz logic $\textbf{B\L}^+$  with public announcement and propose the logic \textbf{D\L}. As an application, we model a fuzzy version of muddy children puzzle with public announcement using  \textbf{D\L}. Finally, we define  a translation  between  \textbf{D\L} and $\textbf{B\L}^+$, and prove the soundness and completeness theorems for  \textbf{D\L}.
	

	\end{abstract}
\section{Introduction}

Dynamical systems are applied to describe the evolution of a system over time. Dynamical logics are used to verify and specify some properties of dynamical systems.
For example, the propositional dynamic logic (PDL)  describes some properties of programs such as comparing the expressing power of programming constructs  \cite{Fischer1978, Harel1979, Harel2000, Teheux2014}. Even in recent works,  some differential dynamical logics have proposed that study  properties of dynamical systems \cite{Platzer2012, Platzer-essay2012}. 

Since \textit{knowledge} and \textit{belief}  can change over time, the dynamics of these systems  are important. The term \textit{Epistemic logic} was  initially used in \cite{vonWright1951} as a category of modal logics,  named \textit{ modes of knowing}   which was treated by logicians. The first precise definition of the two modalities of knowledge and belief was introduced by Hintikka \cite{Hintikka1962} who used the \textit{Doxastic} term  when we consider belief instead of knowledge operator. Further studies  have been performed with the help of computer scientists and game theorists \cite{Aumann1999,Bonanno1999, Fagin1995-reasoning-about-knowledge,Mayer1995}. Also, in many interdisciplinary areas, this field was developed such as economics \cite{Samuelson2004}, computer security \cite{Ramanujam2005}, multi-agent systems \cite{Halpern1990, van_der_Hoek2003}, and social sciences \cite{Gintis2009, Parikh2002}.

Epistemic dynamic logic not only describes how the information can change over  time, but also  epistemic modalities give the ability to reason about the information itself. \cite{Plaza1989} as one of the pioneers in this area defines a logic for a public announcement that indicates how the knowledge of the agents would change after the public announcement of  a proposition. A Gentzen system for logic of puplic announcement proposed in \cite{Negiri2023} in which even false announcement is possible. In \cite{Lindstrom1999-a, Lindstrom1999-b, Segerberg1999-a, Segerberg1999-b}  dynamic doxastic logic and belief revision are studied.
An extension of Aucher's dynamic belief revision (See \cite{Aucher2006}) to the fuzzy environment introduced in \cite{Jing2014}.
In \cite{Ditmarsch1999, Ditmarsch2001, Ditmarsch2003} some actions  more complex than  public announcements have developed. Also,  \cite{Ditmarsch2016, Kuijer2017} study some dynamic epistemic logics in which accessibility relations instead of the possible states can be  updated.

Since knowledge and belief are  not static over time and they contain a degree of vagueness, it seems that interpreting them via a \textit{fuzzy } perspective is appropriate. 
In \cite{CMRR13, Caicedo2004,CR10,CR15}, some modal extensions of G\"odel fuzzy logics are proposed, also some modal extensions of \L ukasiewicz logic  and product fuzzy logic are introduced in \cite{Hansoul2006, Hansoul2013, product2015}.

\cite{Osterman1988} introduces a many-valued modal propositional calculi and gives a decision procedure for proving truth formulae.
In \cite{Santos2020} a four-valued dynamic epistemic logic is proposed, then using a tableau system its soundness and completeness are shown. In \cite{DiNola2021} a dynamic $n$-valued \L ukasiewicz logic ID\L$_n$ was introduced  and by proposing a Kripke-based semantics, some applications of ID\L$_n$ in immune systems are studied. A forensic dynamic n-valued  \L ukasiewicz logic and its corresponding forensic dynamic MV$_n$\_algebra have proposed in  \cite{DiNola2022}. In \cite{DiNola2020} a \L ukasiewicz extension of dynamic propositional logic; introduced in \cite{Kozen1980, Segerberg1977}; is proposed, also a corresponding dynamic MV-algebra is defined. Some possible definitions of public announcement  for G\"odel modal logic are investigated in \cite{Pischke2021}. A fuzzy epistemic logic with public announcement in which both fuzzy transitions and fuzzy propositions are  introduced in \cite{Benevides2022} such that the corresponding semantics has defined using G\"odel algebra. 
Also, a doxastic extension of fuzzy \L ukasiewicz logic \textbf{B\L} based on a pseudo-classical belief is introduced in \cite{Dll2022}.


In this paper, at first we propose an axiomatic system $\textbf{B\L}^+$ that is an extension of  \textbf{B\L} introduced in \cite{Dll2022} which its language is an expansion of the language of \textbf{B\L} with a new operator
	$\succeq$ to  compare the fuzzy  validity of a formula with a fixed value.  A formula $\varphi\succeq g$ is valid when the  validity of $\varphi$ is at least  g, where  g$\in [0,1]$.  By adding this operator to the language, we assign a crisp value to the amount of truth  of a given formula, and intuitively we decide whether a formula has the desired fuzzy value or not. So a formula containing $\succeq$ can be viewed as a classical formula. We also prove the soundness and completeness theorems for $\textbf{B\L}^+$.
	
	 	Afterward we introduce a  dynamic doxastic logic \textbf{D\L} over $\textbf{B\L}^+$. This logic is an extension of doxastic \L ukasiewicz logic which is equipped with a public announcement operator. We suppose that every public announcement is true and explicitly gives us some information about a formula. Then we model a fuzzy version of the muddy children puzzle using this logic. Furthermore, we define  a translation from \textbf{D\L} to $\textbf{B\L}^+$ and prove the soundness and completeness of \textbf{D\L}.
	


\section{Modified  doxastic \L ukasiewicz logic}
In this section we first review  some axioms and properties of propositional fuzzy \L ukasiewicz logic that are used throughout  this paper (see \cite{Hajek1998} for more details).  Then, we review some definitions  from  \cite{Dll2022} to propose modified doxastic \L ukasiewicz logic \textbf{B\L}$^+$ and show that it is sound and complete corresponding to the desired semantics. 

The propositional \L ukasiewicz logic is an extension of fuzzy Basic Logic \textbf{BL} with double negation axiom $\neg \neg \varphi \rightarrow \varphi$. The following propositions are valid in \L ukasiewicz logic: 


$$\begin{array}{llcll}
	(A1) &(\varphi \, \& \, \psi)\rightarrow \varphi &\quad &(A2)& (\varphi \, \& \, \psi) \rightarrow (\psi \, \& \, \varphi) \\
	(\L 0)& \neg \neg \varphi \leftrightarrow \varphi &\quad& (\L 1)& (\neg \varphi \rightarrow \neg \psi) \rightarrow (\psi \rightarrow \varphi) \\
	
	(\L 2) & 
	\multicolumn{4}{l}{((\varphi_1\rightarrow \psi_1) \,\&\, (\varphi_2\rightarrow \psi_2)) \rightarrow ((\varphi_1\,\&\,\varphi_2)\rightarrow (\psi_1\,\&\,\psi_2))}\\
	
%
\end{array}$$

Throughout this paper we denote the set of atomic propositions and the set of agents by $\mathcal{P}$ and $\mathcal{A}$ respectively, and use notation $\perp$  for the atomic proposition that always takes value $0$.

In \cite{Dll2022}, two classes of doxastic extensions of fuzzy \L ukasiewicz logic have proposed. One class is equipped with pseudo-classical that has properties similar to the classical belief and the other class is based on a new notion of belief that is called skeptical belief.
 In the following
   we expand the language of pseudo-classical belief using a new operator $\succeq$ which compares the fuzzy  validity of a given formula with some fixed value and helps us to decide whether a formula has a desired fuzzy value or not. For a  number g$\in [0,1]$, the formula $\varphi\succeq g$ is valid when the validity of $\varphi$ is at least g.  

%
%

\begin{Definition}
	The language of \textit{Modified  doxastic \L ukasiewicz logic}  denoted by \textbf{D\L L}$^+$ is defined with the following BNF:
	$$\varphi::= \perp \;|\; p \;|\; \neg \varphi\;|\; \varphi \succeq g \;|\; \varphi \, \& \, \varphi \;|\; \varphi \rightarrow \varphi \;|\; B_a \varphi \\$$
	where $p\in\mathcal{P}$, $a\in \mathcal{A}$ and $g\in [0,1]$ is a rational number. Throughout this paper whenever we use $g\in [0,1]$ we mean $g$ is a rational number from the interval $[0,1]$. The other usual connectives $\vee$, $\wedge$ and $\veebar$ are defined similar as \L ukasiewicz logic.
\end{Definition}
	
	In the following we give the definition of a  Doxastic \L ukasiewicz logic model proposed in \cite{Dll2022}.
\begin{Definition} 
	A \textit{Doxastic \L ukasiewicz Logic model} (or in short \textit{D\L L-model}) is a tuple $\mathfrak{M}= (S, r_{a_{|a\in \mathcal{A}}}, \pi)$ in which $S$ is a set includes the states of  the model, $r_{a_{|a\in A}}: S\times S \rightarrow [0,1]$ is indistinguishability function and $\pi: S\times \mathcal{P} \rightarrow [0,1]$ is the valuation function that assigns values to each proposition in each state.

	Suppose that $\mathfrak{M} = (S, r_{a_{|a\in \mathcal{A}}}, \pi)$ is a  D\L L-model.
	%
	For each formula $\varphi$ in  \textbf{D\L L}$^+$  and each state $s\in S$ we denote $V(s,\varphi)$ as an extended valuation function which is defined recursively as follows. For simplicity we use $V_s(\varphi)$ instead of $V(s,\varphi)$ and use superscript $\mathfrak{M}$ like $V^{\mathfrak{M}}$ to emphasis on the model if it is needed.
	\begin{align*}
		& V_s^{\mathfrak{M}}(p) = \pi(s,p) \;\; \forall p\in \mathcal{P}, & \\
		& V_s^{\mathfrak{M}}(\neg \varphi) = 1- V_s^{\mathfrak{M}}(\varphi), & \\
		& V_s^\mathfrak{M}(\varphi \succeq g)= \left \lbrace \begin{array}{ll}
			1 & V_s^\mathfrak{M}(\varphi)\geq g \\
			0 & V_s^\mathfrak{M}(\varphi) < g,
		\end{array} \right. & \\
		& V_s^{\mathfrak{M}}(\varphi \,\&\, \psi) = \max\{0, V_s^{\mathfrak{M}}(\varphi)+ V_s^{\mathfrak{M}}(\psi) - 1\}, & \\
		& V_s^{\mathfrak{M}}(\varphi \rightarrow \psi) = \min \{1, 1- V_s^{\mathfrak{M}}(\varphi)+V_s^{\mathfrak{M}}(\psi)\},& \\
		& V_s^{\mathfrak{M}}(B_a \varphi) = \inf_{s'\in S} \max\{1-r_a(s,s'),V_{s'}^{\mathfrak{M}}(\varphi)\}, & 
	\end{align*}
\end{Definition}
\begin{Remark}
	Note that the formula $\varphi \succeq g$ takes crisp values to the amount of truth value of a given formula
\end{Remark}

\begin{Definition}
	Let $\varphi$ be a formula and $\mathfrak{M} = (S, r_{a_{|a\in \mathcal{A}}}, \pi)$ be a D\L L-model and $s\in S$. We say that $\varphi$ is \textit{valid in a pointed model} $(\mathfrak{M},s)$ if $V_s(\varphi)=1$ and denote it by $(\mathfrak{M},s)\vDash \varphi$. If for all $s\in S$ we have $(\mathfrak{M},s)\vDash \varphi$, then we call it $\mathfrak{M}$\textit{-valid} and use $\mathfrak{M} \vDash \varphi$ to denote it. If for all models $\mathfrak{M}$ in a class of models $\mathcal{M}$, the formula is $\mathfrak{M}$\textit{-valid}, we show it by $\mathcal{M}\vDash \varphi$ and call it $\mathcal{M}$-valid. We use $\vDash \varphi$ notation if for all models $\mathfrak{M}$ we have $\mathfrak{M} \vDash \varphi$ and call $\varphi$ as a \textit{valid} formula.
\end{Definition}

\begin{Proposition} \label{prop_Lg_validity}
	The following statements are valid.
	\begin{itemize}
		\item $(\varphi \,\&\, \psi)\succeq g \rightarrow (\varphi\succeq g \,\&\, \psi \succeq g)$
		\item $(\varphi \succeq g \,\&\, \psi \succeq g') \rightarrow (\varphi \succeq g'' \,\&\, \psi \succeq g')\qquad \text{s. t.}\quad g\geq g'' $
	\end{itemize}
\end{Proposition}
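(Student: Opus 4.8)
The plan is to verify both statements directly from the truth-value semantics, working at an arbitrary state $s$ of an arbitrary D\L L-model $\mathfrak{M}$. Since a formula of the shape $\chi_1 \rightarrow \chi_2$ is valid exactly when $V_s(\chi_1) \leq V_s(\chi_2)$ at every such $s$ — because $V_s(\chi_1 \rightarrow \chi_2) = \min\{1, 1 - V_s(\chi_1) + V_s(\chi_2)\}$ equals $1$ iff $V_s(\chi_1) \leq V_s(\chi_2)$ — it suffices to compare antecedent and consequent pointwise, and in both items the antecedent is a formula of the form $\theta \succeq h$ or $(\theta_1 \succeq h_1)\,\&\,(\theta_2 \succeq h_2)$, which by the clause for $\succeq$ takes values only in $\{0,1\}$; hence the case where the antecedent evaluates to $0$ is automatic and only the case value $1$ needs work. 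I would also record the auxiliary observation that whenever $\alpha,\beta$ are formulas with $V_s(\alpha),V_s(\beta)\in\{0,1\}$ we have $V_s(\alpha \,\&\, \beta) = \max\{0, V_s(\alpha) + V_s(\beta) - 1\} = 1$ precisely when $V_s(\alpha) = V_s(\beta) = 1$ and $V_s(\alpha \,\&\, \beta) = 0$ otherwise; in particular $\alpha \,\&\, \beta$ again takes values in $\{0,1\}$.

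For the first statement I would show that if $V_s((\varphi \,\&\, \psi)\succeq g) = 1$ then $V_s(\varphi \succeq g \,\&\, \psi \succeq g) = 1$. The hypothesis reads $\max\{0, V_s(\varphi) + V_s(\psi) - 1\} \geq g$. If $g = 0$ the consequent holds trivially, since $\varphi \succeq 0$ and $\psi \succeq 0$ always evaluate to $1$ and their strong conjunction is then $1$ by the auxiliary observation. If $g > 0$ the maximum must be attained by its second argument, so $V_s(\varphi) + V_s(\psi) \geq 1 + g$; combining with $V_s(\varphi)\leq 1$ and $V_s(\psi)\leq 1$ yields $V_s(\varphi)\geq g$ and $V_s(\psi)\geq g$, hence $V_s(\varphi\succeq g) = V_s(\psi\succeq g) = 1$ and again the strong conjunction is $1$.

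For the second statement I would use $g \geq g''$ and assume $V_s(\varphi \succeq g \,\&\, \psi \succeq g') = 1$; by the auxiliary observation this forces $V_s(\varphi)\geq g$ and $V_s(\psi)\geq g'$. Then $V_s(\varphi)\geq g \geq g''$ gives $V_s(\varphi\succeq g'') = 1$, and together with $V_s(\psi\succeq g') = 1$ the auxiliary observation yields $V_s(\varphi\succeq g'' \,\&\, \psi\succeq g') = 1$, which is what is required.

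The computations are short; the only point needing a moment of care is the degenerate case $g = 0$ in the first item — there one cannot pass from the $\max$ expression to $V_s(\varphi) + V_s(\psi) - 1 \geq g$ — so the split according to whether $g$ is zero or positive is the one genuinely load-bearing step.
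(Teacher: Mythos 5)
Your verification is correct: the paper states this proposition without proof, and the routine semantic check you carry out — reducing validity of an implication to a pointwise comparison, exploiting that $\succeq$-formulas and their strong conjunctions are $\{0,1\}$-valued, and unwinding the $\max$ in the clause for $\&$ — is exactly the intended argument. Your explicit handling of the degenerate case $g=0$ in the first item is a point of care the paper glosses over entirely, and it is indeed the only step where the naive passage from $\max\{0,\cdot\}\geq g$ to $V_s(\varphi)+V_s(\psi)-1\geq g$ could fail.
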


Similar to the axiomatic system \textbf{B\L} described in \cite{Dll2022}, we introduce the following axiomatic system \textbf{B\L}$^+$, where $\varphi$ and $\psi$ are \textbf{D\L L}$^+$-formulas:
\begin{itemize}
	\item[(\L$_B$0)] All instances of tautologies in propositional \L ukasiewicz logic
	\item[(\L$_B$1)] $(B  \varphi \, \& \, B (\varphi \rightarrow \psi)) \rightarrow (B  \psi)$
	\item[(\L$_B$2)] $\neg B  \perp$
	\item[(\L$_g$0)] $(\varphi \,\&\, \psi)\succeq g \rightarrow (\varphi\succeq g \,\&\, \psi \succeq g)$
	\item[(\L$_g$1)] $(\varphi \succeq g \,\&\, \psi \succeq g') \rightarrow (\varphi \succeq g'' \,\&\, \psi \succeq g')\qquad \text{s. t.}\quad g\geq g'' $
	\item[(R$_{\text{MP}}$)] $\cfrac{\varphi \qquad \varphi \rightarrow \psi}{\psi}$
	\item[(R$_\text{B}$)] $\cfrac{\varphi}{B  \varphi}$ 
	\item[(R$_\text{G}$)] $\cfrac{\varphi}{  \varphi\succeq g}$
\end{itemize}

As we see, the differences between \textbf{B\L} and \textbf{B\L}$^+$ are axioms (\L$_g$0) and (\L$_g$1) and  the rule (R$_\text{G}$).
\begin{Lemma} \label{gb_admissible_rules}
	The inference rules (R$_{\text{MP}}$), (R$_\text{B}$) and (R$_\text{G}$)  are semantically admissible, that is if the premises of (R$_{\text{MP}}$), (R$_\text{B}$) or (R$_\text{G}$) are valid, then their conclusions are valid.
\end{Lemma}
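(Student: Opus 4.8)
The plan is to argue purely semantically, reading off each conclusion from the recursive definition of the extended valuation $V$. Fix an arbitrary D\L L-model $\mathfrak{M}= (S, r_{a_{|a\in \mathcal{A}}}, \pi)$ and an arbitrary state $s\in S$; by the definition of validity it suffices to show that whenever the premise(s) of the rule in question are valid---hence take value $1$ at \emph{every} state of \emph{every} model---the conclusion also satisfies $V_s^{\mathfrak{M}}(\cdot)=1$, since $\mathfrak{M}$ and $s$ are arbitrary. I will then treat the three rules separately, each amounting to a one-line unfolding of the relevant truth clause.

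For (R$_{\text{MP}}$) I would assume $\vDash \varphi$ and $\vDash \varphi\rightarrow\psi$, so $V_s^{\mathfrak{M}}(\varphi)=1$ and $V_s^{\mathfrak{M}}(\varphi\rightarrow\psi)=1$; the implication clause then gives $1=\min\{1,\,1-V_s^{\mathfrak{M}}(\varphi)+V_s^{\mathfrak{M}}(\psi)\}=\min\{1,\,V_s^{\mathfrak{M}}(\psi)\}$, whence $V_s^{\mathfrak{M}}(\psi)=1$. For (R$_{\text{G}}$) I would assume $\vDash\varphi$, so $V_s^{\mathfrak{M}}(\varphi)=1\geq g$ because $g\in[0,1]$, and the clause for $\succeq$ immediately yields $V_s^{\mathfrak{M}}(\varphi\succeq g)=1$. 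For (R$_{\text{B}}$) I would assume $\vDash\varphi$, so that $V_{s'}^{\mathfrak{M}}(\varphi)=1$ for \emph{every} $s'\in S$; then $\max\{1-r_a(s,s'),\,V_{s'}^{\mathfrak{M}}(\varphi)\}=1$ for each $s'$, and therefore $V_s^{\mathfrak{M}}(B_a\varphi)=\inf_{s'\in S}1=1$.

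There is essentially no hard part here; the only step deserving a word of justification is the simplification of $\min\{1,V_s^{\mathfrak{M}}(\psi)\}$ to $V_s^{\mathfrak{M}}(\psi)$ in the (R$_{\text{MP}}$) case, which relies on the basic fact that $0\leq V_s^{\mathfrak{M}}(\chi)\leq 1$ for every formula $\chi$ and every state $s$. I would dispatch this once and for all by a routine induction on the structure of $\chi$, checking that each of $\neg$, $\succeq$, $\&$, $\rightarrow$ and $B_a$ sends values in $[0,1]$ to values in $[0,1]$ and that taking an infimum stays in the closed interval. The other subtle point, worth flagging explicitly, is that the (R$_{\text{B}}$) case genuinely uses validity of $\varphi$ at \emph{all} states of the model---not just at the evaluation point $s$---which is precisely what makes this necessitation-style rule sound.
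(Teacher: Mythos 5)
Your proof is correct and is exactly the routine unfolding of the valuation clauses that the paper has in mind when it writes ``It's obvious by definition''; the only difference is that you spell out the details (including the $[0,1]$-boundedness of $V$ and the use of validity at all states in the (R$_\text{B}$) case), which the paper omits.
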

\begin{proof}
	It's obvious by definition.
\end{proof}


\begin{Theorem}\label{soundness_pseudo-classical} \textbf{(Soundness)}
	Let $\mathcal{M}$ be a class of D\L L$^{}$-models.
	If $\; \vdash_{\textbf{B\L}^+ }\varphi$, then $\mathcal{M}\vDash \varphi$.
\end{Theorem}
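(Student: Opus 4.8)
The plan is to proceed by induction on the length of a derivation of $\varphi$ in $\textbf{B\L}^+$, proving the stronger statement that $\varphi$ is valid in \emph{every} D\L L-model; since $\mathcal{M}$ is an arbitrary class of D\L L-models this at once yields $\mathcal{M}\vDash\varphi$. The inductive step is free: if $\varphi$ is obtained from previously derived (hence, by the induction hypothesis, valid) formulas via one of (R$_{\text{MP}}$), (R$_\text{B}$) or (R$_\text{G}$), then $\varphi$ is valid by Lemma \ref{gb_admissible_rules}. So everything reduces to checking that each axiom schema is valid, that is, that for every model $\mathfrak{M}=(S,r_{a_{|a\in\mathcal{A}}},\pi)$ and every $s\in S$ the instance has value $1$ at $s$ under $V_s^{\mathfrak{M}}$.

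Three of the schemas require no computation beyond earlier results. For (\L$_B$0), fix $s$: the clauses for $\neg$, $\&$ and $\rightarrow$ in the definition of $V_s$ are precisely the standard \L ukasiewicz operations on $[0,1]$, while any subformula of the form $B_a\chi$ or $\chi\succeq g$ merely contributes some value in $[0,1]$, so the assignment sending each propositional variable to the value at $s$ of the formula substituted for it is an ordinary $[0,1]$-evaluation; since every propositional \L ukasiewicz tautology takes value $1$ under every such evaluation, every instance of (\L$_B$0) is valid. The schemas (\L$_g$0) and (\L$_g$1) are exactly the two items of Proposition \ref{prop_Lg_validity}. Finally, for (\L$_B$2),
\[
V_s(B_a\perp)=\inf_{s'\in S}\max\{1-r_a(s,s'),\,0\}=1-\sup_{s'\in S}r_a(s,s')=0,
\]
where the last equality uses the seriality-type condition $\sup_{s'\in S}r_a(s,s')=1$ imposed on the indistinguishability functions of a D\L L-model (it holds in particular when $r_a(s,s)=1$); hence $V_s(\neg B_a\perp)=1$.

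Only (\L$_B$1) calls for genuine work. I would first establish the pointwise inequality
\[
\max\{k,x\}\ast\max\!\bigl\{k,\min\{1,1-x+y\}\bigr\}\ \le\ \max\{k,y\}\qquad(k,x,y\in[0,1]),\quad u\ast v:=\max\{0,u+v-1\},
\]
by a short case split on whether $x\le y$ and on whether $k\le y$. Writing $e_{s'}(\chi):=\max\{1-r_a(s,s'),\,V_{s'}(\chi)\}$, so that $V_s(B_a\chi)=\inf_{s'}e_{s'}(\chi)$, and using $V_{s'}(\varphi\rightarrow\psi)=\min\{1,1-V_{s'}(\varphi)+V_{s'}(\psi)\}$, the inequality applied with $k=1-r_a(s,s')$, $x=V_{s'}(\varphi)$, $y=V_{s'}(\psi)$ gives $e_{s'}(\varphi)\ast e_{s'}(\varphi\rightarrow\psi)\le e_{s'}(\psi)$ for every $s'$. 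Since $\ast$ is non-decreasing in each argument and $V_s(B_a\varphi)\le e_{s'}(\varphi)$ and $V_s(B_a(\varphi\rightarrow\psi))\le e_{s'}(\varphi\rightarrow\psi)$, it follows that $V_s(B_a\varphi)\ast V_s(B_a(\varphi\rightarrow\psi))\le e_{s'}(\psi)$ for every $s'$, hence $V_s(B_a\varphi)\ast V_s(B_a(\varphi\rightarrow\psi))\le\inf_{s'}e_{s'}(\psi)=V_s(B_a\psi)$; unravelling the $\rightarrow$-clause, this is exactly $V_s\bigl((B_a\varphi\,\&\,B_a(\varphi\rightarrow\psi))\rightarrow B_a\psi\bigr)=1$, so (\L$_B$1) is valid.

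The one place I expect friction is the bookkeeping in the case analysis behind (\L$_B$1) — nothing deep, but several cases — together with two things to keep honest: the step from the two infima to the single bound uses only monotonicity of $\ast$ and produces an \emph{inequality} (which is all that is needed, and is why no continuity of $\inf$ need be invoked), and (\L$_B$2) really does depend on a seriality-type frame condition on the $r_a$, so that condition must be explicitly part of the class of D\L L-models under consideration for the soundness statement to hold.
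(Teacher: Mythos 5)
Your proof is correct and follows essentially the same route as the paper's: induction on the length of the derivation, with the inference rules handled by Lemma \ref{gb_admissible_rules}, the new axioms (\L$_g$0) and (\L$_g$1) by Proposition \ref{prop_Lg_validity}, and the \textbf{B\L} axioms by checking validity in an arbitrary model --- the only difference being that the paper delegates this last part to Proposition 3.1 of \cite{Dll2022}, whereas you verify it from scratch (and your inequality argument for (\L$_B$1) checks out). Your observation that (\L$_B$2) requires a seriality-type condition $\sup_{s'}r_a(s,s')=1$ on the indistinguishability functions is well taken: the D\L L-model definition as reproduced in this paper does not state it, so it must be understood as inherited from \cite{Dll2022} (where reflexivity is assumed, as it also is in the muddy-children example) for the soundness claim to hold.
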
 
\begin{proof}
	The proof is obtained straightforwardly by using Proposition 3.1 in \cite{Dll2022}, Proposition \ref{prop_Lg_validity} and Lemma \ref{gb_admissible_rules}.
\end{proof}
\begin{Definition}
	A \textbf{D\L L}$^+$-formula $\varphi$ is called \textit{\textbf{B\L}$^{+}$-consistent}, if $\nvdash_{\textbf{B\L}^+}\neg \varphi$. A finite set $\{\varphi_1, \cdots, \varphi_n\}$ is \textbf{B\L}$^{+}$-consistent if $\varphi_1\,\&\,\cdots\,\&\,\varphi_n$ is \textbf{B\L}$^{+}$-consistent. An infinite set $\Gamma$ of \textbf{D\L L}$^{+}$ is \textbf{B\L}$^{+}$-consistent, if all of its finite subsets are \textbf{B\L}$^{+}$-consistent. If the following conditions hold for $\Gamma$, we call $\Gamma$ a \textit{maximal and \textbf{B\L}$^{+}$-consistent} set:
	\begin{enumerate}
		\item  $\Gamma$ be a $\textbf{B\L}^+$-consistent set.
		\item For all D\L L$^+$-formula $\psi\notin\Gamma$, the set $\Gamma \cup \{\psi\}$   not to be \textbf{B\L}$^+$-consistent.
	\end{enumerate}
\end{Definition}

In the following, Lemma \ref{maximal}, Lemma \ref{and_Luka} and Theorem \ref{consistent_e} have similar proofs shown in section 3 of \cite{Dll2022}.

\begin{Lemma}\label{maximal}
	Let $\textbf{B\L}^{+}$ be an axiomatic system. We have
	\begin{itemize}
		\item[(i)] Each $\textbf{B\L}^{+}$-consistent set $\Phi$ of \textbf{D\L L}$^+$-formulae can be extended to a maximal $\textbf{B\L}^{+}$-consistent set.
		\item[(ii)] If $\Phi$ is a maximal  $\textbf{B\L}^{+}$-consistent set, then for all \textbf{D\L L}$^+$-formulae $\varphi$ and $\psi$:
		\begin{enumerate}
			\item $\varphi\,\&\,\psi \in \Phi$ if and only if $\varphi \in \Phi$ and $\psi\in \Phi$,
			\item If $\varphi \in \Phi$ and $\varphi \rightarrow \psi \in \Phi$, then $\psi \in \Phi$,
			\item If $\vdash_{\textbf{B\L}^+} \varphi$, then $\varphi \in \Phi$,
			\item $\varphi \in \Phi$ or $\neg \varphi \in \Phi$. 
		\end{enumerate}
	\end{itemize}
\end{Lemma}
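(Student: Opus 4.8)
The plan is to run the classical Lindenbaum recipe, adapted to the \L ukasiewicz setting exactly as in section~3 of \cite{Dll2022}. For part~(i): since $\mathcal{P}$ and $\mathcal{A}$ are countable and the rationals in $[0,1]$ are countable, the set of \textbf{D\L L}$^+$-formulae is countable, so fix an enumeration $\psi_0,\psi_1,\psi_2,\dots$. Put $\Phi_0:=\Phi$ and, inductively, $\Phi_{n+1}:=\Phi_n\cup\{\psi_n\}$ if this set is $\textbf{B\L}^+$-consistent and $\Phi_{n+1}:=\Phi_n$ otherwise; let $\widehat{\Phi}:=\bigcup_n\Phi_n$. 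Each $\Phi_n$ is $\textbf{B\L}^+$-consistent by construction. Since $\textbf{B\L}^+$-consistency has finite character (a set is consistent iff each finite subset is, and each finite subset of $\widehat{\Phi}$ already lies in some $\Phi_n$), $\widehat{\Phi}$ is $\textbf{B\L}^+$-consistent. It is maximal: if $\psi\notin\widehat{\Phi}$ then $\psi=\psi_n$ for some $n$ and $\psi_n$ was not added, so $\Phi_n\cup\{\psi_n\}$ is inconsistent; as any superset of an inconsistent set is inconsistent (a finite subset witnessing inconsistency of the smaller set also witnesses it for the larger one), $\widehat{\Phi}\cup\{\psi\}$ is inconsistent as well.

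For part~(ii) I would prove the four items in the order $3,2,1,4$, each leaning on the earlier ones. Item~3: if $\vdash_{\textbf{B\L}^+}\varphi$ but $\varphi\notin\Phi$, then by maximality $\Phi\cup\{\varphi\}$ is inconsistent, so $\vdash\neg\big((\chi_1\,\&\,\cdots\,\&\,\chi_k)\,\&\,\varphi\big)$ for some $\chi_1,\dots,\chi_k\in\Phi$; but $\vdash\varphi$ gives $\vdash\varphi\leftrightarrow\neg\perp$, and together with the \L ukasiewicz validity $\vdash(\alpha\,\&\,\neg\perp)\leftrightarrow\alpha$ this yields $\vdash\neg(\chi_1\,\&\,\cdots\,\&\,\chi_k)$, contradicting consistency of $\Phi$. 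Item~2: assume $\varphi,\varphi\rightarrow\psi\in\Phi$ and $\psi\notin\Phi$; by maximality $\Phi\cup\{\psi\}$ is inconsistent, and merging a witnessing finite conjunction of members of $\Phi$ with $\varphi$ and $\varphi\rightarrow\psi$ and using residuation together with the theorems $(\alpha\,\&\,(\alpha\rightarrow\beta))\rightarrow\beta$, $(\L 2)$ and $\beta\,\&\,\neg\beta\rightarrow\perp$ produces a refutable finite conjunction of members of $\Phi$, a contradiction. Item~1 then follows: the ``only if'' direction from $\vdash(\varphi\,\&\,\psi)\rightarrow\varphi$ and $\vdash(\varphi\,\&\,\psi)\rightarrow\psi$ (instances of (A1), (A2), hence in $\Phi$ by item~3) plus two applications of item~2, and the ``if'' direction from the theorem $\varphi\rightarrow(\psi\rightarrow(\varphi\,\&\,\psi))$ (in $\Phi$ by item~3) plus two applications of item~2.

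For item~4, suppose $\varphi\notin\Phi$ and $\neg\varphi\notin\Phi$; by maximality both $\Phi\cup\{\varphi\}$ and $\Phi\cup\{\neg\varphi\}$ are inconsistent, so there are finite conjunctions $f$ and $g$ of members of $\Phi$ with $\vdash f\,\&\,\varphi\rightarrow\perp$ and $\vdash g\,\&\,\neg\varphi\rightarrow\perp$. Residuation gives $\vdash f\rightarrow\neg\varphi$ and, using $(\L 0)$, $\vdash g\rightarrow\varphi$; then $(\L 2)$ together with $\vdash\varphi\,\&\,\neg\varphi\rightarrow\perp$ yields $\vdash\neg(f\,\&\,g)$, which again contradicts $\textbf{B\L}^+$-consistency of $\Phi$.

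The routine portions are the bookkeeping in~(i) (closure under supersets, finite character) and the verification of the handful of propositional \L ukasiewicz validities invoked above. The step that needs genuine care — the place where the argument departs from the classical template — is that $\,\&\,$ in \L ukasiewicz logic is not idempotent, so conjuncts may not be freely duplicated or deleted; each merge of witnessing finite sets ($F\cup\{\varphi,\varphi\rightarrow\psi\}$ in item~2, $F\cup G$ in item~4) must be carried out through residuation and $(\L 2)$ rather than by a naive ``adjoin-and-contract'' move, which is exactly the subtlety already handled in \cite{Dll2022}. Note that the genuinely new ingredients of $\textbf{B\L}^+$ — the operator $\succeq$ and the modalities $B_a$ — play no role here, since beyond item~3 (which gives $\Phi$ all theorems, in particular those arising via $(\text{R}_\text{B})$ and $(\text{R}_\text{G})$) the only machinery used is propositional \L ukasiewicz provability.
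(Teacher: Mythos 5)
Your overall plan --- a Lindenbaum chain for (i), then deriving items 3, 2, 1, 4 of (ii) by extracting a refutable finite conjunction from an assumed inconsistency --- is exactly the standard argument that the paper itself delegates wholesale to Lemma~3.4 of the cited earlier work, and part (i) together with items 3 and 1 of part (ii) go through as you describe. However, there is one step that can genuinely fail, and it sits precisely at the non-idempotence issue you name at the end but do not actually resolve: the \emph{merging} of witnessing finite sets in items 2 and 4. When you combine $F$ with $\{\varphi,\varphi\rightarrow\psi\}$ (item 2) or $F$ with $G$ (item 4), the refutable formula you derive, e.g.\ $\neg(f\,\&\,g)$, is a conjunction over a \emph{multiset}: if $F\cap G\neq\emptyset$ (or if $\varphi$ or $\varphi\rightarrow\psi$ already occurs in $F$), some member of $\Phi$ appears twice. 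The paper's definition of consistency of a finite set only guarantees that the conjunction with each member occurring \emph{once} is not refutable, and since $\&$ is not idempotent this is strictly weaker: $\vdash_{\textbf{B\L}^+}\neg(\chi\,\&\,\chi)$ does not yield $\vdash_{\textbf{B\L}^+}\neg\chi$ (take $\chi = p\wedge\neg p$, whose value is always $\leq 1/2$ but not always $0$). So in the overlapping case you reach no contradiction with the consistency of $\Phi$.

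Residuation and (\L 2) do not repair this. For instance, in item 2 with $\varphi\in F$, collapsing the duplicate would require something like $\vdash(\varphi\,\&\,(\varphi\rightarrow\psi))\rightarrow(\varphi\,\&\,\psi)$, which is not a \L ukasiewicz validity (take $V(\varphi)=0.6$, $V(\psi)=0.3$: the antecedent evaluates to $0.3$, the consequent to $0$). Nor can you force the witnessing sets to be disjoint, since enlarging a witness preserves refutability but shrinking it does not. To close the gap one needs either a notion of (in)consistency that is stable under repeated conjuncts --- equivalently, a deduction relation allowing multiple uses of premises together with the local deduction theorem $\Gamma,\varphi\vdash\psi\iff\Gamma\vdash\varphi^{n}\rightarrow\psi$ for some $n$, which is the device used in H\'ajek-style completeness proofs --- or a separate argument that a maximal consistent set is closed under $\&$ of its members (which you cannot invoke here, since your item 1 is itself proved from item 2). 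As written, items 2 and 4, and hence the ``if'' half of item 1, have this hole.
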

\begin{proof}
	The proof is similar to the proof of Lemma 3.4 in \cite{Dll2022}.
\end{proof}

\begin{Lemma}\label{and_Luka}
	Let $\Gamma = \{\varphi_1, \cdots, \varphi_n\}$ be a set of formulae, then $\Gamma \vdash_{\textbf{B\L}^+} \varphi_1\,\&\,\cdots\,\&\,\varphi_n$.
\end{Lemma}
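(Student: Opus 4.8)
The plan is to argue by induction on $n$. For $n=1$ the claim is immediate, since $\varphi_1 \vdash_{\textbf{B\L}^+}\varphi_1$ holds by the very definition of the consequence relation (a premise derives itself).

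For the inductive step, assume $\{\varphi_1,\dots,\varphi_{n-1}\}\vdash_{\textbf{B\L}^+}\varphi_1\,\&\,\cdots\,\&\,\varphi_{n-1}$ and write $\chi := \varphi_1\,\&\,\cdots\,\&\,\varphi_{n-1}$. Since $\{\varphi_1,\dots,\varphi_{n-1}\}\subseteq\{\varphi_1,\dots,\varphi_n\}$, monotonicity of $\vdash_{\textbf{B\L}^+}$ gives $\{\varphi_1,\dots,\varphi_n\}\vdash_{\textbf{B\L}^+}\chi$; and trivially $\{\varphi_1,\dots,\varphi_n\}\vdash_{\textbf{B\L}^+}\varphi_n$. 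The key ingredient is that the schema $\alpha\rightarrow(\beta\rightarrow(\alpha\,\&\,\beta))$ is a theorem of propositional \L ukasiewicz logic: it follows from residuation applied to the provable implication $(\alpha\,\&\,\beta)\rightarrow(\alpha\,\&\,\beta)$, and is therefore an instance of (\L$_B$0). Instantiating this schema with $\alpha:=\chi$, $\beta:=\varphi_n$ and applying (R$_{\text{MP}}$) twice --- first with premise $\chi$ to obtain $\varphi_n\rightarrow(\chi\,\&\,\varphi_n)$, then with premise $\varphi_n$ --- yields $\{\varphi_1,\dots,\varphi_n\}\vdash_{\textbf{B\L}^+}\chi\,\&\,\varphi_n$, i.e.\ $\varphi_1\,\&\,\cdots\,\&\,\varphi_n$, completing the induction.

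There is essentially no serious obstacle here; the only points that need a word of care are (i) confirming that $\alpha\rightarrow(\beta\rightarrow(\alpha\,\&\,\beta))$ is indeed a \L ukasiewicz tautology, so that it falls under (\L$_B$0), and (ii) the harmless matter of how $\varphi_1\,\&\,\cdots\,\&\,\varphi_n$ is parenthesised: the construction above produces the left-associated form $(\cdots((\varphi_1\,\&\,\varphi_2)\,\&\,\varphi_3)\cdots\,\&\,\varphi_n)$, and the provable associativity and commutativity of $\&$ in \L ukasiewicz logic (again instances of (\L$_B$0), together with (R$_{\text{MP}}$)) allow one to pass to any other bracketing should a different convention be intended.
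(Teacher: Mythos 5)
Your proof is correct: the schema $\alpha\rightarrow(\beta\rightarrow(\alpha\,\&\,\beta))$ is indeed a theorem of \L ukasiewicz logic (by residuation, hence an instance of (\L$_B$0)), and the induction with two applications of (R$_{\text{MP}}$) goes through. The paper itself only cites the proof of the corresponding lemma in \cite{Dll2022}, and your argument is exactly the standard one being referred to, so there is nothing further to add.
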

\begin{proof}
	The proof is similar to the proof of Lemma 3.5 in \cite{Dll2022}.
\end{proof}

\begin{Theorem}\label{consistent_e}
	Let $\Phi$ be a  \textbf{B\L}$^{+}$-consistent set of D\L L$^{+}$-formulae and $\varphi$ be a D\L L$^{+}$-formula such that 
	$\Phi \nvdash_{\textbf{B\L}^{+}} \varphi$. If $\Phi^{*} = \Phi\cup \{\neg \varphi\}$, then $\Phi^{*}$ is \textbf{B\L}$^{+}$-consistent.
\end{Theorem}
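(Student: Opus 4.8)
The plan is to argue by contradiction, reducing inconsistency of $\Phi^{*}$ to derivability of $\varphi$ from $\Phi$; the bridge is the \L ukasiewicz-logic equivalence $\neg(\chi\,\&\,\neg\varphi)\leftrightarrow(\chi\rightarrow\varphi)$ together with Lemma \ref{and_Luka}.

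First I would suppose toward a contradiction that $\Phi^{*}=\Phi\cup\{\neg\varphi\}$ is not $\textbf{B\L}^{+}$-consistent. By the definition of consistency for (possibly infinite) sets, some finite subset $\Delta\subseteq\Phi^{*}$ fails to be $\textbf{B\L}^{+}$-consistent, i.e.\ writing $\Delta=\{\delta_1,\dots,\delta_k\}$ we have $\vdash_{\textbf{B\L}^{+}}\neg(\delta_1\,\&\,\cdots\,\&\,\delta_k)$. Since $\Phi$ itself is $\textbf{B\L}^{+}$-consistent, $\Delta$ cannot be a subset of $\Phi$; hence $\neg\varphi\in\Delta$. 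Reordering, write $\Delta=\{\psi_1,\dots,\psi_n,\neg\varphi\}$ with $\psi_1,\dots,\psi_n\in\Phi$ (allowing $n=0$), and set $\chi:=\psi_1\,\&\,\cdots\,\&\,\psi_n$, with the convention $\chi:=\neg\perp$ when $n=0$. Then $\vdash_{\textbf{B\L}^{+}}\neg(\chi\,\&\,\neg\varphi)$.

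Next I would pass from $\vdash_{\textbf{B\L}^{+}}\neg(\chi\,\&\,\neg\varphi)$ to $\vdash_{\textbf{B\L}^{+}}\chi\rightarrow\varphi$. In propositional \L ukasiewicz logic, $\neg(\chi\,\&\,\neg\varphi)$ is provably equivalent to $\chi\rightarrow\neg\neg\varphi$, which by $(\L 0)$ is provably equivalent to $\chi\rightarrow\varphi$; all these equivalences are instances of \L ukasiewicz tautologies and so are available via $(\L_B0)$, and an application of $(\mathrm{R}_{\mathrm{MP}})$ yields $\vdash_{\textbf{B\L}^{+}}\chi\rightarrow\varphi$. Then, by Lemma \ref{and_Luka}, $\{\psi_1,\dots,\psi_n\}\vdash_{\textbf{B\L}^{+}}\chi$ (and for $n=0$ this is just $\vdash_{\textbf{B\L}^{+}}\neg\perp$, an instance of $(\L_B0)$), so since $\psi_1,\dots,\psi_n\in\Phi$ we get $\Phi\vdash_{\textbf{B\L}^{+}}\chi$; combining with $\vdash_{\textbf{B\L}^{+}}\chi\rightarrow\varphi$ via $(\mathrm{R}_{\mathrm{MP}})$ gives $\Phi\vdash_{\textbf{B\L}^{+}}\varphi$, contradicting the hypothesis $\Phi\nvdash_{\textbf{B\L}^{+}}\varphi$. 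Hence $\Phi^{*}$ is $\textbf{B\L}^{+}$-consistent.

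The argument is essentially routine; the only mildly delicate points are the bookkeeping of the degenerate case $n=0$ (where $\chi$ is taken to be $\neg\perp$, using $\neg\perp\,\&\,\neg\varphi\leftrightarrow\neg\varphi$) and checking that the chain of equivalences used in the middle step really are propositional \L ukasiewicz tautologies, so that they legitimately fall under $(\L_B0)$ rather than requiring the modal axioms. Neither of these is a genuine obstacle.
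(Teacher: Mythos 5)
Your proof is correct, and it is essentially the standard argument the paper invokes (it defers to Theorem 3.2 of \cite{Dll2022}): reduce inconsistency of $\Phi\cup\{\neg\varphi\}$ to a finite witness containing $\neg\varphi$, use the \L ukasiewicz tautology $\neg(\chi\,\&\,\neg\varphi)\leftrightarrow(\chi\rightarrow\varphi)$ (which the paper itself cites later), and conclude $\Phi\vdash_{\textbf{B\L}^{+}}\varphi$ via Lemma \ref{and_Luka} and $(\mathrm{R}_{\mathrm{MP}})$. Your handling of the degenerate case $n=0$ is a correct and welcome detail.
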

\begin{proof}
	The proof is similar to the proof of Theorem 3.2 in \cite{Dll2022}.
\end{proof}

\begin{Theorem} \label{thm_model_existence}
	Let $\Phi$ be a \textbf{B\L}$^{+}$-consistent set and $\Phi \vdash_{\textbf{B\L}^+} \varphi$, where $\varphi$ is a D\L L$^+$-formula. Then there is a D\L L-model $\mathfrak{M}$ and a state $s$ such that $V_{s}^{\mathfrak{M}}(\varphi) = 1$. 
\end{Theorem}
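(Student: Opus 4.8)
The plan is to run a Henkin-style canonical model construction adapted to the many-valued setting, the role of the new operator $\succeq$ being precisely to let us read a genuine $[0,1]$-valued valuation off an ordinary (crisp) maximal consistent set. First I would reduce the claim to a membership statement: since $\Phi\vdash_{\textbf{B\L}^+}\varphi$, rule (R$_\text{G}$) gives $\Phi\vdash_{\textbf{B\L}^+}\varphi\succeq 1$, and Lemma \ref{maximal}(i) (together with Theorem \ref{consistent_e}) produces a maximal $\textbf{B\L}^+$-consistent set $\Gamma^\ast\supseteq\Phi$ with $\varphi\succeq 1\in\Gamma^\ast$. It then suffices to exhibit a D\L L-model $\mathfrak{M}$ and a state $s$ at which every $\succeq$-formula of $\Gamma^\ast$ gets value $1$; in particular $V_s^{\mathfrak{M}}(\varphi\succeq 1)=1$, which by the semantic clause for $\succeq$ forces $V_s^{\mathfrak{M}}(\varphi)=1$.

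Next I would build the canonical model $\mathfrak{M}^c=(S^c,(r^c_a)_{a\in\mathcal{A}},\pi^c)$. Let $S^c$ be the set of all maximal $\textbf{B\L}^+$-consistent sets of \textbf{D\L L}$^+$-formulae, and for $s\in S^c$ and a formula $\psi$ put $\lVert\psi\rVert_s:=\sup\{\,g\in\mathbb{Q}\cap[0,1]:(\psi\succeq g)\in s\,\}$; using (\L$_B$0), (\L$_g$0), (\L$_g$1), Proposition \ref{prop_Lg_validity} and Lemma \ref{maximal}(ii) one checks that $\{g:(\psi\succeq g)\in s\}$ is an initial segment of $\mathbb{Q}\cap[0,1]$, so $\lVert\psi\rVert_s\in[0,1]$ is well defined. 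Set $\pi^c(s,p):=\lVert p\rVert_s$ and
$$r^c_a(s,t):=\inf\bigl\{\max\{1-g,\ \lVert\psi\rVert_t\}\ :\ \psi\text{ a formula},\ g\in\mathbb{Q}\cap[0,1],\ (B_a\psi\succeq g)\in s\bigr\}.$$
This is a purely syntactic quantity, so the definition is not circular, and it is tailored so that the Kripke clause for $B_a$ will come out correct; the infimum is over a nonempty set by (R$_\text{B}$) and (R$_\text{G}$).

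The heart of the argument is the Truth Lemma: for every \textbf{D\L L}$^+$-formula $\psi$ and every $s\in S^c$, $V_s^{\mathfrak{M}^c}(\psi)=\lVert\psi\rVert_s$. This goes by induction on $\psi$. The atomic case is the definition of $\pi^c$; the cases $\neg\psi$, $\psi_1\,\&\,\psi_2$, $\psi_1\to\psi_2$ and $\psi\succeq g$ follow by combining the \L ukasiewicz truth functions with (\L$_B$0), (\L$_g$0), (\L$_g$1), Lemma \ref{and_Luka} and Lemma \ref{maximal}(ii)---the point being that $\lVert\cdot\rVert_s$ respects $\neg,\&,\to$ and is $\{0,1\}$-valued on $\succeq$-formulae. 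The case $B_a\psi$ is the delicate one: the inequality $V_s^{\mathfrak{M}^c}(B_a\psi)\le\lVert B_a\psi\rVert_s$ is read off the definition of $r^c_a$ and the induction hypothesis, while for the reverse inequality one needs an Existence Lemma---whenever $(B_a\psi\succeq g)\notin s$, produce $t\in S^c$ with $r^c_a(s,t)$ arbitrarily close to $1$ and $\lVert\psi\rVert_t<g$---which is obtained by showing, via (\L$_B$1), (\L$_B$2) and Theorem \ref{consistent_e}, that a suitable set built from the ``$B_a$-consequences at $s$'' together with a witness $\neg(\psi\succeq g')$ (some rational $g'<g$) is $\textbf{B\L}^+$-consistent, then extending it by Lemma \ref{maximal}(i). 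Applying the Truth Lemma at $s:=\Gamma^\ast$ to $\varphi\succeq 1\in\Gamma^\ast$ gives $V_s^{\mathfrak{M}^c}(\varphi\succeq 1)=1$, hence $V_s^{\mathfrak{M}^c}(\varphi)=1$, as required.

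I expect the main obstacle to be the modal case of the Truth Lemma, and in particular the Existence Lemma: one has to fix the precise definition of $r^c_a$ and the precise consistent set to be extended so that the infimum defining $V_s(B_a\psi)$ is genuinely realised---or at least approximated to within any $\varepsilon$---by honest members of $S^c$, which is exactly where the many-valued belief modality must be made to interact correctly with the crisp operator $\succeq$. A secondary, more routine difficulty is verifying that $\lVert\cdot\rVert_s$ commutes with the \L ukasiewicz connectives using only the available axioms about $\succeq$; should these prove too weak, the argument would need to be supplemented with further derivable $\succeq$-facts.
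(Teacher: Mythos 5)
Your route is genuinely different from the paper's, and with the axioms actually available in $\textbf{B\L}^+$ it does not go through. The paper never attempts a many-valued truth lemma: its canonical model is \emph{crisp}. Its states are maximal $\textbf{B\L}^+$-consistent sets of a special shape (each containing a maximal $\textbf{B\L}$-consistent core $\Phi$ together with all formulae $\varphi\succeq g$ for $\varphi\in\Phi$, $g\in(0,1]$), the accessibility relation and the valuation are $\{0,1\}$-valued, and the truth lemma is the Boolean statement $\varphi\in\Phi^*\iff V_{s_{\Phi^*}}(\varphi)=1$; the only case not already in \cite{Dll2022} is $\psi\succeq g$, which is immediate precisely because every formula takes a crisp value in that model. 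You instead propose to read a genuine $[0,1]$-valuation $\lVert\psi\rVert_s=\sup\{g:\psi\succeq g\in s\}$ off an arbitrary maximal consistent set and to prove $V_s(\psi)=\lVert\psi\rVert_s$. That is the Hansoul--Teheux/Pavelka-style strategy, and it would yield a stronger result, but it needs a much richer axiomatization of $\succeq$ than the paper supplies.

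The difficulty you flag at the end is fatal rather than routine. The only $\succeq$-principles in $\textbf{B\L}^+$ are (\L$_g$0), (\L$_g$1) and (R$_\text{G}$); none of them relates $\neg\psi\succeq g$ or $(\psi\to\chi)\succeq g$ to thresholds for $\psi$ and $\chi$, and there is no axiom $\varphi\succeq 0$ nor a converse of (\L$_g$0). Concretely, the set $\{p\succeq 0.9,\ \neg p\succeq 0.9\}$ is $\textbf{B\L}^+$-consistent: take any D\L L-model and reinterpret every $\succeq$-formula as identically $1$; all axioms and rules of $\textbf{B\L}^+$ remain valid under this deviant semantics, while $\neg(p\succeq 0.9\,\&\,\neg p\succeq 0.9)$ gets value $0$, so it is not derivable. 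Extending this set to a state $s$ of your canonical model gives $\lVert p\rVert_s\geq 0.9$ and $\lVert\neg p\rVert_s\geq 0.9$, which is incompatible with $V_s(\neg p)=1-V_s(p)$: your Truth Lemma already fails at the negation case, and no choice of $r^c_a$ can repair it. A second, independent problem is your definition of $r^c_a$: it does not deliver the ``easy'' inequality $V_s(B_a\psi)\geq\lVert B_a\psi\rVert_s$ either. If $B_a\psi\succeq 0.9\in s$ and $\lVert\psi\rVert_t=0.5$, your infimum only forces $r^c_a(s,t)\leq\max\{0.1,0.5\}=0.5$, so $\max\{1-r^c_a(s,t),\lVert\psi\rVert_t\}$ can be as small as $0.5<0.9$. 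To salvage your approach one would have to add Pavelka-style book-keeping axioms for $\succeq$ over all connectives and redesign $r^c_a$ together with the Existence Lemma; as it stands, the paper's crisp canonical model is the construction that matches the given proof system.
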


\begin{proof}
	We just need to modify the proof of Theorem 3.3 in \cite{Dll2022} slightly.
%
	%
	 It is needed to  show that for the following canonical model $\mathfrak{M}^c = (S^c, r^c, \pi^c)$, and for each maximal and $\textbf{B\L}^{+}$-consistent set $\Phi^*$ we have:
	\begin{equation}\label{eq_00_canonical}
		\varphi\in \Phi^* \iff V_{s_{\Phi^*}}^{\mathfrak{M}^c}(\varphi)=1.
	\end{equation}
where, the canonical model $\mathfrak{M}^c$ is defined as follows: 
	$$S^c = \left\lbrace \begin{array}{l|cl}
		\multirow{3}{*}[0.25cm]{$s_{\Phi^{*}}$}& \multirow{3}{*}{} & \Phi^* \text{ is a maximal and }\textbf{B\L}^+\text{-consistent set containing a maximal }   \textbf{B\L}\text{-consistent }\\
			& &  \text{set }\Phi  \text{ and all formulae of the form } \varphi\succeq g,  \text{where }  \varphi \in \Phi,\; g\in(0,1]
		\end{array}\right\rbrace,
		$$

	\begin{align*}
		& r ^c (s_\Phi , s_\Psi) = \left\lbrace \begin{array}{lc}
			1 & \Phi\backslash B  \subseteq \Psi \\
			0 & \text{otherwise}
		\end{array}\right.; \qquad \Phi \backslash B  \stackrel{def}{=} \{\varphi \,|\, B  \varphi \in \Phi\},\\
		& \pi^c (s_{\Phi}, p) = \left\lbrace \begin{array}{lc}
			1 & p\in \Phi \\
			0 & p \notin \Phi 
		\end{array} \right.; \qquad p\in \mathcal{P}.
	\end{align*}

	By induction on the complexity of $\varphi$, we show that statement (\ref{eq_00_canonical}) holds. It is  enough to check only the case $\varphi  = \psi \succeq g$, since the other cases have similar proof as the proof of Theorem 3.3 in \cite{Dll2022}.
	
	For ``$\Rightarrow$" direction, if $\psi \succeq g \in \Phi^*$, then by the definition of $S^c$, there is a maximal and \textbf{B\L}-consistent set $\Phi$, such that
	$\psi \in \Phi\subseteq \Phi^*$, thus  $\psi \in \Phi^*$, therefore by induction hypothesis we have $V_{s_{\Phi^*}}^{\mathfrak{M}^c}(\psi)=1$. Hence $V_{s_{\Phi^*}}^{\mathfrak{M}^c}(\psi)\geq g$, which means that $V_{s_{\Phi^*}}^{\mathfrak{M}^c}(\psi\succeq g)=1$.
	For the other direction, assume $V_{s_{\Phi^*}}^{\mathfrak{M}^c}(\psi\succeq g)=1$. Thus $V_{s_{\Phi^*}}^{\mathfrak{M}^c}(\psi)\geq g$. Since all formulas in the canonical model take crisp values and by definition we have $g\in(0,1]$, we obtain $V_{s_{\Phi^*}}^{\mathfrak{M}^c}(\psi)=1$. Therefore, by induction hypothesis we have $\psi \in \Phi^*$, and so by definition and maximality of $\Phi^*$ and (R$_\text{G}$) we have $\psi\succeq g \in \Phi^*$.
\end{proof}

Note that the set of states $S^c$ of canonical model $\mathfrak{M}^c$ defined in the proof of Theorem \ref{thm_model_existence} is non-empty by following lemma.

\begin{Lemma}\label{lem_g_set_consistency}
	If $\Phi$ is a \textbf{B\L}$^+$-consistent, then $\{\varphi\succeq g\mid \varphi\in \Phi,\; g\in(0,1])\}$ is \textbf{B\L}$^+$-consistent too.
\end{Lemma}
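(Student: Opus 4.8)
The plan is to show that the set $\Delta \stackrel{def}{=} \{\varphi\succeq g\mid \varphi\in\Phi,\; g\in(0,1]\}$ is finitely satisfiable in the proof system, i.e.\ every finite subset is $\textbf{B\L}^+$-consistent, and then invoke the definition of consistency for infinite sets. So first I would take an arbitrary finite subset $\{\varphi_1\succeq g_1,\ldots,\varphi_n\succeq g_n\}$ of $\Delta$, with each $\varphi_i\in\Phi$ and each $g_i\in(0,1]$. By Lemma \ref{and_Luka} (applied inside $\Phi$, which contains all the $\varphi_i$), we have $\Phi\vdash_{\textbf{B\L}^+}\varphi_1\,\&\,\cdots\,\&\,\varphi_n$; more to the point, since $\Phi$ is $\textbf{B\L}^+$-consistent, the conjunction $\varphi_1\,\&\,\cdots\,\&\,\varphi_n$ is itself $\textbf{B\L}^+$-consistent (this is exactly what consistency of the finite subset $\{\varphi_1,\ldots,\varphi_n\}\subseteq\Phi$ means). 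The goal is then to deduce consistency of $(\varphi_1\succeq g_1)\,\&\,\cdots\,\&\,(\varphi_n\succeq g_n)$.

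The key step is to connect $\varphi_i\succeq g_i$ back to $\varphi_i$ at the syntactic level. I would use the rule (R$_\text{G}$): if we can derive $\varphi_1\,\&\,\cdots\,\&\,\varphi_n$ from some consistent theory, then we can derive $(\varphi_1\,\&\,\cdots\,\&\,\varphi_n)\succeq g$ for any $g\in(0,1]$; taking $g=\max_i g_i$ (or just $g=1$) and then applying axiom (\L$_g$0) repeatedly — which distributes $\succeq g$ over $\&$ — gives $(\varphi_1\succeq g)\,\&\,\cdots\,\&\,(\varphi_n\succeq g)$, and finally (\L$_g$1) lets us lower each threshold from $g$ down to the required $g_i\le g$. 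The cleanest route, though, avoids reasoning about derivations from $\Phi$ and instead argues by contradiction on the semantic side via soundness (Theorem \ref{soundness_pseudo-classical}): suppose $(\varphi_1\succeq g_1)\,\&\,\cdots\,\&\,(\varphi_n\succeq g_n)$ is not consistent, so $\vdash_{\textbf{B\L}^+}\neg\big((\varphi_1\succeq g_1)\,\&\,\cdots\,\&\,(\varphi_n\succeq g_n)\big)$; since $\Phi$ is consistent it has a model (by Theorem \ref{thm_model_existence}, or directly from the completeness machinery) $(\mathfrak{M},s)$ with $V_s(\varphi_i)=1$ for all $i$, hence $V_s(\varphi_i\succeq g_i)=1$ for all $i$ since $g_i\le 1$, hence $V_s\big((\varphi_1\succeq g_1)\,\&\,\cdots\,\&\,(\varphi_n\succeq g_n)\big)=1$, contradicting soundness applied to the derived negation.

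The main obstacle is circularity: Theorem \ref{thm_model_existence} is used to build the canonical model whose nonemptiness is what this lemma is supposed to guarantee, so I must be careful not to invoke it here. To sidestep this, I would run the purely syntactic argument of the previous paragraph: from consistency of $\{\varphi_1,\ldots,\varphi_n\}$ we know $\nvdash_{\textbf{B\L}^+}\neg(\varphi_1\,\&\,\cdots\,\&\,\varphi_n)$, and I need to transfer this to $\nvdash_{\textbf{B\L}^+}\neg\big(\bigAnd_i(\varphi_i\succeq g_i)\big)$. For this, the load-bearing fact is that $\varphi\succeq g$ is \emph{weaker} than $\varphi$ in the sense that $\vdash_{\textbf{B\L}^+}\varphi\rightarrow(\varphi\succeq g)$ — which should follow from (R$_\text{G}$) together with (\L$_B$0) by a deduction-theorem-style argument, or can be added as a routine lemma — so that a refutation of $\bigAnd_i(\varphi_i\succeq g_i)$ would, by contraposition in \L ukasiewicz logic, yield a refutation of $\varphi_1\,\&\,\cdots\,\&\,\varphi_n$, contradicting consistency of $\{\varphi_1,\ldots,\varphi_n\}\subseteq\Phi$. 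Verifying $\vdash_{\textbf{B\L}^+}\varphi\rightarrow(\varphi\succeq g)$ from the given axioms — in particular checking it does not need the semantic detour — is the one spot requiring genuine care; everything else is bookkeeping with (A1), (A2), (\L$2$) and the already-cited lemmas.
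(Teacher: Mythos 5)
Your second paragraph already contains the paper's proof in outline: assume some finite subset $\{\varphi_1\succeq g_1,\ldots,\varphi_n\succeq g_n\}$ is refutable, derive $\varphi_1\,\&\,\cdots\,\&\,\varphi_n$ from the consistent set $\{\varphi_1,\ldots,\varphi_n\}\subseteq\Phi$ by Lemma \ref{and_Luka}, apply (R$_\text{G}$) with $g_0=\max_i g_i$, distribute $\succeq g_0$ over $\&$ with (\L$_g$0), and lower the thresholds one by one with (\L$_g$1) to obtain $(\varphi_1\succeq g_1)\,\&\,\cdots\,\&\,(\varphi_n\succeq g_n)$, contradicting the assumed refutation. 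Had you stopped there, your proposal would coincide with the paper's argument. You also correctly diagnose that the semantic detour through Theorem \ref{thm_model_existence} would be circular, since that theorem's canonical model needs this lemma to have a nonempty state set.

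The difficulty is the argument you actually commit to in your last paragraph. The ``load-bearing fact'' $\vdash_{\textbf{B\L}^+}\varphi\rightarrow(\varphi\succeq g)$ is false: taking $V_s(\varphi)=1/2$ and $g=4/5$ gives $V_s(\varphi\succeq g)=0$ and hence $V_s(\varphi\rightarrow(\varphi\succeq g))=1/2<1$, so the formula is not valid and, by the soundness theorem you yourself invoke, cannot be a theorem. Nor does it follow from (R$_\text{G}$) by a deduction-theorem-style argument: (R$_\text{G}$) is a rule of proof, and such rules cannot be internalized as implications (just as necessitation in modal logic does not yield $\varphi\rightarrow\Box\varphi$); moreover \L ukasiewicz logic only satisfies a local deduction theorem. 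Consequently the contraposition step that is supposed to transfer $\nvdash_{\textbf{B\L}^+}\neg(\varphi_1\,\&\,\cdots\,\&\,\varphi_n)$ to $\nvdash_{\textbf{B\L}^+}\neg\bigl((\varphi_1\succeq g_1)\,\&\,\cdots\,\&\,(\varphi_n\succeq g_n)\bigr)$ has no support. The repair is simply to drop that paragraph and carry out the (R$_\text{G}$)/(\L$_g$0)/(\L$_g$1) derivation you sketched earlier, which transfers \emph{derivability} (not underivability) in the direction from the $\varphi_i$ to the $\varphi_i\succeq g_i$ and never needs an implication of the form $\varphi\rightarrow(\varphi\succeq g)$.
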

\begin{proof}
	For the sake of contradiction assume $\Phi$ is \textbf{B\L}$^+$-consistent but $\Gamma = \{\varphi\succeq g\mid \varphi\in \Phi,\; g\in(0,1]\}$ is not \textbf{B\L}$^+$-consistent. So by the definition, there is a finite subset $\{\varphi_1\succeq g_1, \cdots, \varphi_n\succeq g_n\}\subseteq \Gamma$ such that
	\begin{equation}\label{eq_inconsistency}
		\vdash_{\textbf{B\L}^+} \neg (\varphi_1\succeq g_1\,\&\, \cdots \,\&\,\varphi_n\succeq g_n).
	\end{equation}
	Let $g_0 = \max\{g_1,\cdots, g_n\}$ and $\Gamma' = \{\varphi_1, \cdots, \varphi_n\}$. By  \textbf{B\L}$^+$-consistency of $\Gamma'$ and Lemma \ref{and_Luka} we have:
		\begin{align*}
		&(1)&& \Gamma' \vdash_{\textbf{B\L}^+} \varphi_1 \,\&\,\cdots \,\&\, \varphi_n & \\
		&(2)&& \Gamma' \vdash_{\textbf{B\L}^+} (\varphi_1\,\&\,\cdots \,\&\,\varphi_n) \succeq g_0 & (R_G)\\
		&(3)&& \Gamma' \vdash_{\textbf{B\L}^+} ((\varphi_1\,\&\,\cdots \,\&\,\varphi_n) \succeq g_0) \rightarrow (\varphi_1 \succeq g_0 \,\&\,\cdots \,\&\, \varphi_n\succeq g_0)& (\L_g0) \\ 
		&(4)&&\Gamma' \vdash_{\textbf{B\L}^+} \varphi_1 \succeq g_0 \,\&\,\cdots \,\&\, \varphi_n\succeq g_0 & (2),(3), (R_{MP})\\
		&(5)&&\Gamma' \vdash_{\textbf{B\L}^+} (\varphi_1 \succeq g_0 \,\&\,\cdots \,\&\, \varphi_n\succeq g_0) \rightarrow (\varphi_1 \succeq g_1 \,\&\,\cdots \,\&\, \varphi_n\succeq g_0)& (\L_g1) \\ 
		&(6)&&\Gamma'\vdash_{\textbf{B\L}^+}(\varphi_1 \succeq g_1 \,\&\,\cdots \,\&\, \varphi_n\succeq g_0)& (4),(5),(R_{MP})\\
		&\vdots&&& \\
		&(i)&&\Gamma' \vdash_{\textbf{B\L}^+} (\varphi_1 \succeq g_1 \,\&\,\cdots \,\&\, \varphi_n\succeq g_n)& \text{similar to (5),(6) steps}\\
		\end{align*}
	which has a contradiction to $\Gamma'\vdash_{\textbf{B\L}^+}\neg (\varphi_1\succeq g_1\,\&\, \cdots \,\&\,\varphi_n\succeq g_n)$. Thus the statement holds.
\end{proof}
\begin{Theorem}
	If $\vDash \varphi$, then $\vdash_{\textbf{B\L}^+}\varphi$.
\end{Theorem}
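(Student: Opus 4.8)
The plan is to prove completeness by contraposition: assuming $\nvdash_{\textbf{B\L}^+}\varphi$, I will exhibit a pointed D\L L-model that falsifies $\varphi$, which shows $\not\vDash\varphi$. The key observation is that, since $\textbf{B\L}^+$ contains the schema (\L$_B$0) of all \L ukasiewicz tautologies, and in particular $\neg\neg\varphi\rightarrow\varphi$ is such a tautology, the failure $\nvdash_{\textbf{B\L}^+}\varphi$ is equivalent to the $\textbf{B\L}^+$-consistency of the singleton set $\{\neg\varphi\}$. This is the one point that deserves care in a many-valued setting, where in general $\nvdash\psi$ is strictly weaker than ``$\neg\psi$ is satisfiable''; here the double negation law closes that gap.

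First I would check that $\{\neg\varphi\}$ is $\textbf{B\L}^+$-consistent. If it were not, then by the definition of consistency $\vdash_{\textbf{B\L}^+}\neg\neg\varphi$; since $\neg\neg\varphi\rightarrow\varphi$ is an instance of (\L$_B$0), the rule (R$_{\text{MP}}$) yields $\vdash_{\textbf{B\L}^+}\varphi$, contradicting the hypothesis.

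Next, set $\Phi=\{\neg\varphi\}$. Trivially $\Phi\vdash_{\textbf{B\L}^+}\neg\varphi$ (the $n=1$ case of Lemma~\ref{and_Luka}). Since $\Phi$ is $\textbf{B\L}^+$-consistent, Theorem~\ref{thm_model_existence} applies to $\Phi$ and the formula $\neg\varphi$, producing a D\L L-model $\mathfrak{M}$ and a state $s$ with $V_s^{\mathfrak{M}}(\neg\varphi)=1$. By the valuation clause for negation, $V_s^{\mathfrak{M}}(\varphi)=1-V_s^{\mathfrak{M}}(\neg\varphi)=0\neq 1$, hence $(\mathfrak{M},s)\not\vDash\varphi$ and therefore $\not\vDash\varphi$. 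Contraposing gives the claim: $\vDash\varphi$ implies $\vdash_{\textbf{B\L}^+}\varphi$.

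I do not expect a genuine obstacle here: the substantive work --- the Lindenbaum-style extension of a $\textbf{B\L}^+$-consistent set to a maximal one that additionally contains all the formulae $\psi\succeq g$ (legitimate by Lemma~\ref{lem_g_set_consistency}), the construction of the canonical model $\mathfrak{M}^c$, and the truth lemma~(\ref{eq_00_canonical}) including the new clause for $\psi\succeq g$ --- has already been carried out inside Theorem~\ref{thm_model_existence}. The only subtlety worth flagging is the reduction of $\nvdash_{\textbf{B\L}^+}\varphi$ to the availability of a $\textbf{B\L}^+$-consistent set on which to invoke that theorem, and this works precisely because the double negation axiom is present in $\textbf{B\L}^+$.
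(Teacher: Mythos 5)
Your proof is correct and follows essentially the same route as the paper: reduce $\nvdash_{\textbf{B\L}^+}\varphi$ to the $\textbf{B\L}^+$-consistency of $\{\neg\varphi\}$ and then invoke Theorem~\ref{thm_model_existence} to produce a pointed model with $V_s(\neg\varphi)=1$. The only cosmetic difference is that you verify the consistency of $\{\neg\varphi\}$ directly via the double negation axiom, whereas the paper obtains it by applying Theorem~\ref{consistent_e} (with the empty premise set) and explicitly passes through Lemma~\ref{maximal}.
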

\begin{proof}
	For the sake of contradiction assume we have $\nvdash_{\textbf{B\L}^{+}}\varphi$. By Theorem \ref{consistent_e}, the set $\{\neg \varphi\}$ is \textbf{B\L}$^+$-consistent, and by Lemma \ref{maximal} there is a maximal and \textbf{B\L}$^+$-consistent set $\Phi^*$ that contains $\{\neg \varphi\}$, and thus by Theorem \ref{thm_model_existence} there is a model $\mathfrak{M}$ and state $s$ such that $V_s^\mathfrak{M}(\neg \varphi)=1$ which is a contradiction with $\vDash\varphi.$
\end{proof}

\section{Dynamic Doxastic \L{ukasiewicz Logic} with Public Announcement}

In this section we propose a dynamic version of  doxastic \L ukasiewicz logic. We expand the language \textbf{D\L L}$^+$ with a new operator $[\varphi\succeq g]\psi$ for public announcement. We also propose an axiomatic system  \textbf{D\L} and prove its  soundness and completeness.

\begin{Definition}

		The \textit{language of dynamic doxastic \L ukasiewicz logic} denoted by  \textbf{DD\L L}  is defined by the following BNF.  \textbf{DD\L L} has a new formulae $[\varphi\succeq g] \varphi$, 
		where $\varphi$ is \textbf{D\L L}$^+$-formula.

		$$\varphi::= \perp \;|\; p \;|\; \neg \varphi\;|\; \varphi \succeq g \;|\; \varphi \, \& \, \varphi \;|\; \varphi \rightarrow \varphi \;|\; B_a \varphi \;|\; [\varphi\succeq g] \varphi  \\$$

%
	
	Furthermore, we also use notation $\varphi \leftrightarrow \psi$ for the formula $\varphi\rightarrow \psi \,\&\, \psi\rightarrow \varphi$. We read the new formula $[\varphi\succeq g]\psi$ as \textit{after public announcement of $\varphi\succeq g$, the $\psi$ holds (See example \ref{example_muddy_children}).}
\end{Definition}

	We first give the definition of an \textit{update model}.
	Update model explains how a public announcement $\varphi\succeq g$ effects on the states of the model $\mathfrak{M}$. Note that we assume the public announcements are true, i.e, just the true propositions would be announced and no one announces a lie.

%

\begin{Definition}(\textbf{Update Model})
	Let $\mathfrak{M} = (S, r_{a_{|a\in\mathcal{A}}}, \pi)$  be a D\L L-model. The update model of $\mathfrak{M}$ after announcement $\varphi\succeq g$ is denoted by $\mathfrak{M}^{\varphi\succeq g} = (S^{{\varphi\succeq g}}, r_{a_{|_{a\in \mathcal{A}}}}^{\varphi\succeq g}, \pi^{\varphi\succeq g})$, and is defined as follows:
	\begin{align*}
		& S^{\varphi \succeq g}=\{s\,|\, s\in S^\mathfrak{M}, V_s^{\mathfrak{M}}(\varphi\succeq g)=1\},\\
		& r_a^{\varphi \geq g}(s,s') = r_a^{\mathfrak{M}}(s,s'), \qquad \forall s,s'\in S^{\varphi \succeq g},\\
		& \pi^{\varphi \succeq g} (s,p) = \pi^\mathfrak{M}(s,p),\qquad\forall s\in S^{\varphi\succeq g}, \; p \in \mathcal{P}.
	\end{align*}
Here we define the valuation function for  $[\varphi\succeq g]\psi$, the other cases are defined as in previous section.

	\begin{align*}
		V_s^{\mathfrak{M}}([\varphi \succeq g]\psi) = \left\lbrace \begin{array}{ll}
			1 & V_s(\varphi \succeq g) = 0\\
			V^{\mathfrak{M}^{\varphi \succeq g}} (\psi) & otherwise.
		\end{array} \right. & 
	\end{align*}
\end{Definition}


In the following we model a fuzzy version of muddy children puzzle with public announcement.

\subsection{Example (Fuzzy Muddy Children)}\label{example_muddy_children}

In traditional muddy children puzzle, a group of  children has been playing outdoors and some of them  have become dirty and may have  mud on their foreheads. Children can
just see whether other children are muddy, and not if there is any mud on their
own foreheads.
In \cite{EGL2020} a fuzzy version of muddy children puzzle has proposed, and the authors modeled that using Epistemic G\"odel Logic.
In order to consider fuzzy relations between different states, they suppose that the agents have visual impairment and propose a distinguishing criteria corresponding to the amounts of mud on the agents’ foreheads. 

Here we give a dynamic simplified version of this fuzzy muddy children puzzle. Assume there are three children Alice, Bob and Cath. Each state of the model is a tuple $(m_a, m_b, m_c)$, where $m_a, m_b$ and $m_c$ take fuzzy values in $[0,1]$ corresponding to the amount of mud on the Alice, Bob and Cath's foreheads, respectively. So, the number of all possible states is infinite. In order to see  the effect of some announcements on the model, we restrict the general model to some states shown in   figure \ref{fig_fuzzy_muddy_children}-\textbf{I}.
Accessibility relations for Alice, Bob and Cath are shown by red, green and blue numbers, respectively and it is assumed that the model is reflexive and symmetric. For example, $r_c(s_1,s_2) = 0.95$ states that Cath can  distinguish a little that Bob is a bit muddy (e.g. $V_{s_1}(m_b)= 0.1$) or is slightly more muddy (e.g. $V_{s_2}(m_b) = 0.2$). 

Suppose that the father publicly announces that at least one of the children's forehead is very muddy. For example he publicly announces that $[\neg (\neg m_a \,\&\, \neg m_b \,\&\, \neg m_c)\succeq 0.8]$.
State $s_1$ in which $m_a, m_b$ and $m_c$ have values less than $0.8$ would be removed in the updated model (Figure \ref{fig_fuzzy_muddy_children}-\textbf{II}). 
%
 Also assume after  the first announcement, children have no reaction and so father  publicly  announces the previous  statement for the second time. Then the possible worlds reduce to four states (Figure \ref{fig_fuzzy_muddy_children}-\textbf{III}). In the following we can see the Cath's belief about her muddiness at the beginning, after the first announcement and after the second announcement:

$$\begin{array}{l}
	V_{s_3}^{\mathfrak{M}^{\textbf{I}}}(B_c m_c) = \\
	\qquad \inf\{\max\{0, 0.9\}, \max\{0.05,0.2\}, \max\{1, 0.9\}, \max\{1, 0.9\}\}= 0.2, \\
	V_{s_3}^{\mathfrak{M}^{\textbf{II}}}([\neg (\neg m_a \,\&\, \neg m_b \,\&\, \neg m_c)\succeq 0.8]B_c m_c) = 0.2,\\
	V_{s_3}^{\mathfrak{M}^{\textbf{III}}} ([\neg (\neg m_a \,\&\, \neg m_b \,\&\, \neg m_c)\succeq 0.8][\neg (\neg m_a \,\&\, \neg m_b \,\&\, \neg m_c)\succeq 0.8]B_c m_c)= \\
	\qquad \inf \{\max\{0, 0.9\}, \max\{1, 0.9\}\} = 0.9.
\end{array}$$
The above computation shows that Cath is almost certain of her muddiness after the second announcement.

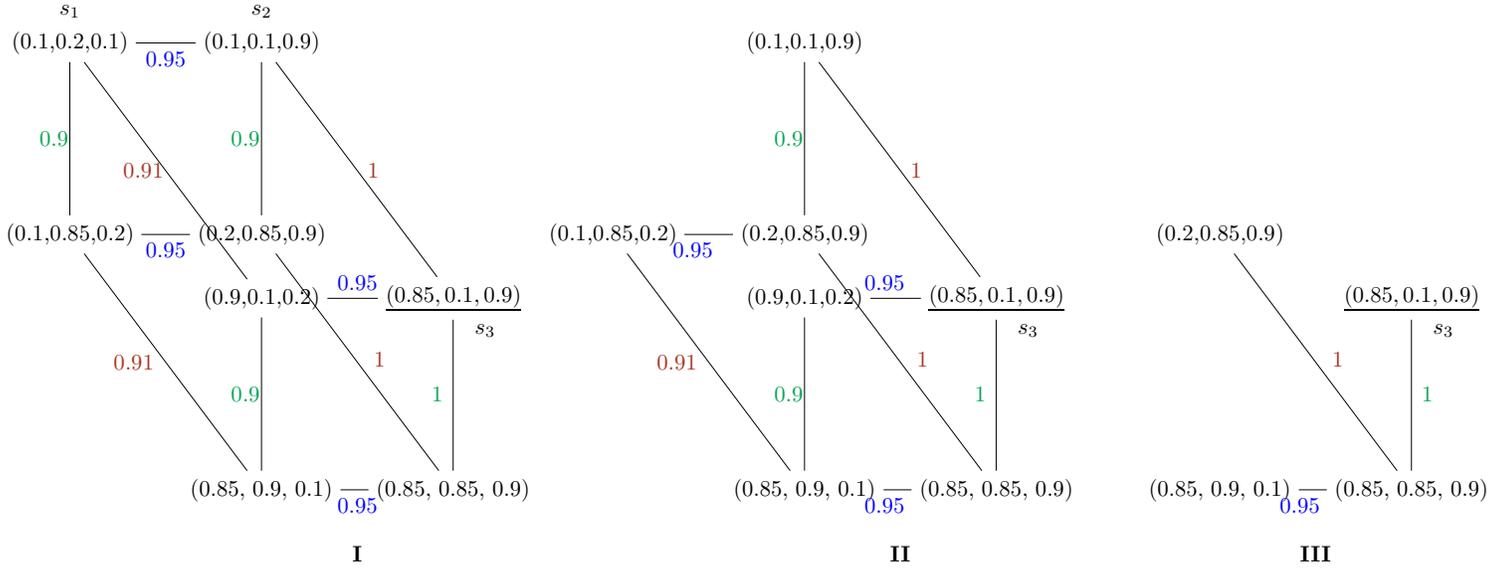
\begin{figure} \label{fig_fuzzy_muddy_children}
	\begin{center}
		\advance\leftskip-1.7cm
		\resizebox*{20cm}{!}{
			\begin{tikzpicture}
				
				\node (v1) at (-1,1) {(0.1,0.2,0.1)};
				\node (v3) at (-1,-2) {(0.1,0.85,0.2)};
				\node (v4) at (2,-2) {(0.2,0.85,0.9)};
				\node (v2) at (2,1) {(0.1,0.1,0.9)};
				
				\node (v5) at (2,-3) {(0.9,0.1,0.2)};
				\node (v8) at (5,-3) {$\underline{(0.85, 0.1, 0.9)}$};
				\node (v6) at (2,-6) {(0.85, 0.9, 0.1)};
				\node (v7) at (5,-6) {(0.85, 0.85, 0.9)};
				
				\draw  (v1) edge (v2);
				\draw  (v1) edge (v3);
				\draw  (v2) edge (v4);
				\draw  (v3) edge (v4);
				\draw  (v5) edge (v6);
				\draw  (v6) edge (v7);
				\draw  (v7) edge (v8);
				\draw  (v8) edge (v5);
				\draw  (v1) edge (v5);
				\draw  (v2) edge (v8);
				\draw  (v3) edge (v6);
				\draw  (v4) edge (v7);
				
				\node at (-1.25,-0.5) {\textcolor{Green}{0.9}};
				\node at (1.75,-0.5) {\textcolor{Green}{0.9}};
				\node at (1.75,-4.5) {\textcolor{Green}{0.9}};
				\node at (4.75,-4.5) {\textcolor{Green}{1}};
				
				\node at (3.5,-6.25) {\textcolor{blue}{0.95}};
				\node at (3.5,-2.75) {\textcolor{blue}{0.95}};
				\node at (0.5,-2.25) {\textcolor{blue}{0.95}};
				\node at (0.5,0.75) {\textcolor{blue}{0.95}};
				
				\node at (3.75,-1) {\textcolor{Mahogany}{1}};
				\node at (0.15,-1) {\textcolor{Mahogany}{0.91}};
				\node at (0,-4) {\textcolor{Mahogany}{0.91}};
				\node at (3.85,-3.95) {\textcolor{Mahogany}{1}};
				
				\node (v10) at (7.5,-2) {(0.1,0.85,0.2)};
				\node (v11) at (10.5,-2) {(0.2,0.85,0.9)};
				\node (v12) at (10.5,1) {(0.1,0.1,0.9)};
				
				\node (v13) at (10.5,-3) {(0.9,0.1,0.2)};
				\node (v14) at (13.5,-3) {$\underline{(0.85, 0.1, 0.9)}$};
				\node (v15) at (10.5,-6) {(0.85, 0.9, 0.1)};
				\node (v16) at (13.5,-6) {(0.85, 0.85, 0.9)};
				
				\draw  (v12) edge (v11);
				\draw  (v10) edge (v11);
				\draw  (v13) edge (v15);
				\draw  (v15) edge (v16);
				\draw  (v16) edge (v14);
				\draw  (v14) edge (v13);
				\draw  (v12) edge (v14);
				\draw  (v10) edge (v15);
				\draw  (v11) edge (v16);
				
				\node at (10.25,-0.5) {\textcolor{Green}{0.9}};
				\node at (10.25,-4.5) {\textcolor{Green}{0.9}};
				\node at (13.25,-4.5) {\textcolor{Green}{1}};
				
				\node at (11.75,-6.25) {\textcolor{blue}{0.95}};
				\node at (11.75,-2.75) {\textcolor{blue}{0.95}};
				\node at (8.75,-2.25) {\textcolor{blue}{0.95}};
				
				\node at (12.25,-1) {\textcolor{Mahogany}{1}};
				
				\node at (8.5,-4) {\textcolor{Mahogany}{0.91}};
				\node at (12.35,-3.95) {\textcolor{Mahogany}{1}};
				
				\node (v17) at (17,-2) {(0.2,0.85,0.9)};
				\node (v18) at (20,-3) {$\underline{(0.85, 0.1, 0.9)}$};
				\node (v19) at (17,-6) {(0.85, 0.9, 0.1)};
				\node (v20) at (20,-6) {(0.85, 0.85, 0.9)};
				
				\draw  (v19) edge (v20);
				\draw  (v20) edge (v18);
				\draw  (v17) edge (v20);
				
				\node at (20.25,-4.5) {\textcolor{Green}{1}};
				\node at (18.25,-6.25) {\textcolor{blue}{0.95}};
				\node at (18.85,-3.95) {\textcolor{Mahogany}{1}};
				
				\node at (5.5,-3.5) {$s_3$};
				\node at (14,-3.5) {$s_3$};
				\node at (20.5,-3.5) {$s_3$};
				\node at (-1,1.5) {$s_1$};
				\node at (2,1.5) {$s_2$};
				
				\node at (3.5,-7) {\textbf{I}};
				\node at (12,-7) {\textbf{II}};
				\node at (18.5,-7) {\textbf{III}};
				
		\end{tikzpicture}}
	\end{center}
	\caption{Fuzzy muddy children example (The state in the real world is shown with an underline)}
\end{figure}

\subsection{Soundess and Completeness} \label{subsection-pa-soundness_and_completeness}
In this section we propose an axiomatic system \textbf{D\L}, and show that it is sound. Then we define a translation between $\textbf{B\L}^+$ and $\textbf{D\L}$ that is used to prove the completeness theorem. 

\begin{Proposition} \label{valid_schema}
	The following schemata are valid.
	\begin{enumerate}
		\item $[\varphi \succeq g]p \leftrightarrow ((\varphi \succeq g) \rightarrow p)$
		\item $[\varphi \succeq g]\neg \psi \leftrightarrow ((\varphi \succeq g) \rightarrow \neg [\varphi \succeq g]\psi)$
		\item $[\varphi \succeq g](\psi \,\&\, \chi) \leftrightarrow ([\varphi \succeq g]\psi \,\&\, [\varphi \succeq g]\chi)$
		\item $[\varphi \succeq g] (\psi \rightarrow \chi) \leftrightarrow [\varphi \succeq g]\neg (\psi \,\&\, \neg \chi)$
		\item $[\varphi \succeq g]B_a \psi \leftrightarrow ((\varphi \succeq g)\rightarrow B_a [\varphi \succeq g]\psi)$
	\end{enumerate}
\end{Proposition}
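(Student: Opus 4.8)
I read ``valid'' here in the sense of the validity notion fixed earlier, so the goal is to show that for every D\L L-model $\mathfrak{M}$ and every state $s$ the displayed biconditionals get value $1$ under $V_s^{\mathfrak{M}}$. Since $\varphi\leftrightarrow\psi$ abbreviates $(\varphi\rightarrow\psi)\,\&\,(\psi\rightarrow\varphi)$ and in \L ukasiewicz semantics $V_s\big((\alpha\rightarrow\beta)\,\&\,(\beta\rightarrow\alpha)\big)=1$ iff $V_s(\alpha)=V_s(\beta)$, it suffices to check, for each of (1)--(5), that the two sides of the $\leftrightarrow$ always receive \emph{equal} $V_s$-values. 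The whole proof then rests on a single case distinction on the crisp value $V_s^{\mathfrak{M}}(\varphi\succeq g)\in\{0,1\}$, the only point at which the update clause branches.

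If $V_s^{\mathfrak{M}}(\varphi\succeq g)=0$, then $s\notin S^{\varphi\succeq g}$, so $V_s^{\mathfrak{M}}([\varphi\succeq g]\theta)=1$ for every $\theta$; in particular the left sides of (1)--(5) all have value $1$, and so does each conjunct $[\varphi\succeq g]\psi$, $[\varphi\succeq g]\chi$ occurring on the right of (3). On the right, the implications in (1), (2), (5) have antecedent $\varphi\succeq g$ of value $0$, hence value $1$; the right side of (3) is $\max\{0,1+1-1\}=1$; and the right side of (4) is $[\varphi\succeq g]\neg(\psi\,\&\,\neg\chi)$, again of value $1$. So both sides agree (on value $1$) in this case.

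If $V_s^{\mathfrak{M}}(\varphi\succeq g)=1$, then $s\in S^{\varphi\succeq g}$ and, straight from the valuation clause, $V_{s'}^{\mathfrak{M}}([\varphi\succeq g]\theta)=V_{s'}^{\mathfrak{M}^{\varphi\succeq g}}(\theta)$ for \emph{every} $s'\in S^{\varphi\succeq g}$ and every $\theta$, in particular for $s'=s$. For (1): $V_s^{\mathfrak{M}^{\varphi\succeq g}}(p)=\pi^{\varphi\succeq g}(s,p)=\pi^{\mathfrak{M}}(s,p)=V_s^{\mathfrak{M}}(p)$, while the right side $(\varphi\succeq g)\rightarrow p$ has value $\min\{1,V_s^{\mathfrak{M}}(p)\}=V_s^{\mathfrak{M}}(p)$. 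For (2) and (3): expand $\neg$ and $\,\&\,$ inside $\mathfrak{M}^{\varphi\succeq g}$, rewrite each $V_s^{\mathfrak{M}^{\varphi\succeq g}}(\theta)$ as $V_s^{\mathfrak{M}}([\varphi\succeq g]\theta)$, and observe that the right-side implication ``$(\varphi\succeq g)\rightarrow\cdot$'', with its true antecedent, collapses to exactly the same expression. Item (4) in fact needs no case split: $\psi\rightarrow\chi$ and $\neg(\psi\,\&\,\neg\chi)$ have equal value in every model (the usual \L ukasiewicz identity), hence so do $[\varphi\succeq g](\psi\rightarrow\chi)$ and $[\varphi\succeq g]\neg(\psi\,\&\,\neg\chi)$ at every state.

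The one step that is not pure bookkeeping is (5). Using $r_a^{\varphi\succeq g}=r_a^{\mathfrak{M}}$ on $S^{\varphi\succeq g}$ and the identity above,
$$V_s^{\mathfrak{M}^{\varphi\succeq g}}(B_a\psi)=\inf_{s'\in S^{\varphi\succeq g}}\max\{1-r_a^{\mathfrak{M}}(s,s'),\,V_{s'}^{\mathfrak{M}}([\varphi\succeq g]\psi)\},$$
whereas $V_s^{\mathfrak{M}}(B_a[\varphi\succeq g]\psi)$ is the \emph{same} expression with $\inf$ taken over all of $S$. These infima agree: for a discarded $s'\notin S^{\varphi\succeq g}$ we have $V_{s'}^{\mathfrak{M}}(\varphi\succeq g)=0$, so $V_{s'}^{\mathfrak{M}}([\varphi\succeq g]\psi)=1$ and the corresponding term equals $1$, which cannot lower an infimum that is already $\le 1$; and $S^{\varphi\succeq g}\ni s$ is non-empty, so no vacuous-infimum issue arises. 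Hence the left side of (5) equals $V_s^{\mathfrak{M}}(B_a[\varphi\succeq g]\psi)$, and its right side $(\varphi\succeq g)\rightarrow B_a[\varphi\succeq g]\psi$ has the same value since its antecedent is true at $s$. I expect this domain-restriction argument for the belief infimum to be the only genuinely non-routine point; everything else is just unwinding the definitions of $\max$, $\min$ and of the update model.
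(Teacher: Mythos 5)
Your proposal is correct and follows essentially the same route as the paper's proof: a case split on the crisp value of $\varphi\succeq g$, unwinding of the update-model and valuation clauses, and the observation for item (5) that states outside $S^{\varphi\succeq g}$ contribute only terms equal to $1$ to the infimum. The only (harmless) difference is that you first reduce validity of each $\leftrightarrow$ to equality of the two sides' truth values, whereas the paper verifies each implication separately and appeals to symmetry for the converse direction.
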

\begin{proof}
	We prove that each scheme is valid for an arbitrary D\L L-model $\mathfrak{M} = (S, r_{a_{|_{a\in \mathcal{A}}}}, \pi)$ and an arbitrary state $s\in S$.
	
	\textbf{(1):} We show that  $V_s([\varphi\succeq g]p \rightarrow ((\varphi\succeq g)\rightarrow p)) =1$.  If $V_s^\mathfrak{M}(\varphi \succeq g)=0$, then we have
	\begin{align*}
		& V_s^{\mathfrak{M}}([\varphi \succeq g]p \rightarrow (\varphi \succeq g \rightarrow p))=&\\
		& \min \{1, 1-V_s^{\mathfrak{M}}([\varphi \succeq g]p)+V_s^{\mathfrak{M}}(\varphi \succeq g \rightarrow p)\}=&\\
		&\min \{1, 1-1+1\}=1
	\end{align*}
	So assume $V_s^{\mathfrak{M}}(\varphi \succeq g)=1$. In this case we have $V_s^{\mathfrak{M}}([\varphi \succeq g]p) = V_s^{\mathfrak{M}^{\varphi \succeq g}}(p) = V_s^\mathfrak{M} (p)$ and $V_s^{\mathfrak{M}}((\varphi \succeq g)\rightarrow p )= V_s^\mathfrak{M}(p)$ and so $V_s([\varphi\succeq g]p \rightarrow ((\varphi\succeq g)\rightarrow p)) =1$. By a similar argument it can be shown that $V_s(((\varphi\succeq g)\rightarrow p) \rightarrow [\varphi\succeq g]p   ) =1$.
	
	
	\textbf{(2):} 
	Let $V_s^{\mathfrak{M}^{\varphi \succeq g}}(\varphi \succeq g)=1$.  We have
	\begin{align*}
		& V_s^\mathfrak{M}([\varphi \succeq g]\neg \psi \rightarrow((\varphi \succeq g) \rightarrow \neg [\varphi \succeq g]\psi)) =\\
		& \min\{1, 1- V_s^\mathfrak{M}([\varphi \succeq g]\neg \psi) + V_s^\mathfrak{M}((\varphi \succeq g) \rightarrow \neg [\varphi \succeq g]\psi)\} = \\
		& \min \{1, 1- V_s^{\mathfrak{M}^{\varphi \succeq g}}(\neg\psi) + V_s^\mathfrak{M}(\neg [\varphi\succeq g ]\psi)\} = \\
		& \min \{1, 1-1+ V_s^{\mathfrak{M}^{\varphi \succeq g}}(\psi) + 1- V_s^\mathfrak{M}([\varphi\succeq g ]\psi)\} = \\
		& \min \{1, 1- 1 + V_s^{\mathfrak{M}^{\varphi \succeq g}}(\psi) + 1- V_s^{\mathfrak{M}^{\varphi \succeq g}}(\psi)\} = 1.
	\end{align*}
	If $V_s^{\mathfrak{M}^{\varphi \succeq g}}(\varphi \succeq g)=0$, the statement is obviously obtained. The inverse  part is shown by a similar discussion.
	
	\textbf{(3):}
	We assume that $V_s^{\mathfrak{M}}(\varphi \succeq g) =1$, the other case is trivial. We have:
	\begin{align}
		& V_s^\mathfrak{M}([\varphi \succeq g]\psi \,\&\, \chi \rightarrow ([\varphi \succeq g]\psi \,\&\, [\varphi \succeq g]\chi)) = \nonumber \\
		&\min \{1, 1-V_s^{\mathfrak{M}}([\varphi \succeq g]\psi \,\&\, \chi) + V_s^\mathfrak{M}([\varphi \succeq g]\psi \,\&\, [\varphi \succeq g]\chi)\} = \nonumber\\ 
		&\min \{1, 1-V_s^{\mathfrak{M}^{\varphi \succeq g}}(\psi \,\&\, \chi) + \max\{0, V_s^\mathfrak{M}([\varphi \succeq g]\psi) + V_s^\mathfrak{M}([\varphi\succeq g]\chi)-1\}\} = \nonumber\\
		& \min \{1, 1-\max\{0, V_s^{\mathfrak{M}^{\varphi \succeq g}}(\psi ) + V_s^{\mathfrak{M}^{\varphi \succeq g}}(\chi) - 1\} + \max\{0, V_s^{\mathfrak{M}^{\varphi \succeq g}}(\psi) + V_s^{\mathfrak{M}^{\varphi \succeq g}}(\chi)-1\}\} = 1. \nonumber
	\end{align}
	$V_s^\mathfrak{M}( ([\varphi \succeq g]\psi \,\&\, [\varphi \succeq g]\chi)\rightarrow [\varphi \succeq g]\psi \,\&\, \chi )=1$ has  similar computations.\\
	
	\textbf{(4):}
	It is easy to check that $V_s(\psi \rightarrow \chi) = V_s(\neg (\psi \,\&\, \neg \chi))$. Thus the statement holds.
	
	\textbf{(5):} Let $V_s^{\mathfrak{M}^{\varphi \succeq g}}(\varphi \succeq g)=1$. We have:
	\begin{align}
		& V_s^\mathfrak{M}([\varphi \succeq g]B_a \psi \rightarrow ((\varphi \succeq g)\rightarrow B_a [\varphi \succeq g]\psi)) = \nonumber \\
		& \min \{1, 1-V_s^\mathfrak{M}([\varphi \succeq g]B_a \psi) + V_s^\mathfrak{M}((\varphi \succeq g)\rightarrow B_a[\varphi \succeq g]\psi)\} = \nonumber\\
		& \min \{1, 1-V_s^\mathfrak{M}([\varphi \succeq g]B_a \psi) + V_s^\mathfrak{M}(B_a[\varphi \succeq g]\psi)\} = \nonumber \\
		&\min \{1, 1-V_s^\mathfrak{M^{\varphi \succeq g}}(B_a \psi) + V_s^\mathfrak{M}(B_a[\varphi \succeq g]\psi)\} = \nonumber \\
		& \min \{1, 1-\inf_{s'\in S^{\varphi \succeq g}}\max\{1-r_a^{\varphi \succeq g}(s,s') , V_{s'}^\mathfrak{M^{\varphi \succeq g}}(\psi)\}
		+ \inf_{s''\in S^{\mathfrak{M}}}\max\{1-r_a^\mathfrak{M}(s, s''), V_{s''}^\mathfrak{M}([\varphi \succeq g]\psi)\}\} = \nonumber \\
		& \min \{1, 1-\inf_{s'\in S^{\varphi \succeq g}}\max\{1-r_a^{\varphi \succeq g}(s,s') , V_{s'}^\mathfrak{M^{\varphi \succeq g}}(\psi)\}
		+ \inf_{s''\in S^{\mathfrak{M}}}\max\{1-r_a^\mathfrak{M}(s, s''), V_{s''}^{\mathfrak{M}^{\varphi\succeq g}}(\psi)\}\} =1 \nonumber
	\end{align}
	Note that the last statement is followed from the fact that when $V_{s''}^{\mathfrak{M}^{\varphi \succeq g}}(\psi)=1$, for the states $s''$ in which $V_{s''}(\varphi\succeq g)=0$, then we have
	$$  \inf_{s'\in S^{\varphi \succeq g}}\max\{1-r_a^{\varphi \succeq g}(s,s') , V_{s'}^\mathfrak{M^{\varphi \succeq g}}(\psi)\} = \inf_{s''\in S^{\mathfrak{M}}}\max\{1-r_a^\mathfrak{M}(s, s''), V_{s''}^{\mathfrak{M}^{\varphi\succeq g}}(\psi)\}.$$
	$V_s^\mathfrak{M}( ((\varphi \succeq g)\rightarrow B_a [\varphi \succeq g]\psi)\rightarrow [\varphi \succeq g]B_a \psi ) =1$ has a similar argument.
\end{proof}

\begin{Definition}
	Let $\varphi, \psi$ and $\chi$ be \textbf{DD\L L}-formulae  and $a\in\mathcal{A}$. The following Dynamic \L ukasiewicz axiomatic system \textbf{D\L} is an extension of doxastic \L ukasiewicz logic \textbf{B\L}$^+$.
	\begin{itemize}
		\item[($\text{\L}_D0$)] all tautologies of Doxastic \L ukasiewicz logic \textbf{B\L}$^+$.
		\item[($\text{\L}_D1$)] $[\varphi \succeq g]p \leftrightarrow ((\varphi \succeq g) \rightarrow p)$
		\item[($\text{\L}_D2$)] $[\varphi \succeq g]\neg \psi \leftrightarrow ((\varphi \succeq g) \rightarrow \neg [\varphi \succeq g]\psi)$
		\item[($\text{\L}_D3$)] $[\varphi \succeq g](\psi \,\&\, \chi) \leftrightarrow ([\varphi \succeq g]\psi \,\&\, [\varphi \succeq g]\chi)$
		\item[($\text{\L}_D4$)] $[\varphi \succeq g] (\psi \rightarrow \chi) \leftrightarrow [\varphi \succeq g]\neg (\psi \,\&\, \neg\chi)$
		\item[($\text{\L}_D5$)]  $[\varphi \succeq g]B_a \psi \leftrightarrow ((\varphi \succeq g)\rightarrow B_a [\varphi \succeq g]\psi)$
	\end{itemize}
\end{Definition}

\begin{Theorem} \textbf{(Soundness)} \label{soundness}
	
	The axiomatic system \textbf{D\L} is sound,  i.e. if a \textbf{DD\L L}-formula $\varphi$ is provable in \textbf{D\L}, then it is valid in all D\L L-models.
\end{Theorem}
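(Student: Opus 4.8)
The plan is to prove soundness by induction on the length of a derivation in \textbf{D\L}, following the standard template for dynamic logics. The base cases split into two families: the axioms inherited from \textbf{B\L}$^+$ (schema $\text{\L}_D0$) and the genuinely new reduction axioms $\text{\L}_D1$ through $\text{\L}_D5$. For the inherited axioms, validity in all D\L L-models is exactly Theorem \ref{soundness_pseudo-classical}, so nothing new is needed there — although one should note that a \textbf{B\L}$^+$-tautology is now allowed to contain subformulae built with the announcement operator, so strictly speaking one wants the observation that any substitution instance of a \textbf{B\L}$^+$-tautology by \textbf{DD\L L}-formulae remains valid; this is immediate because the \textbf{B\L}$^+$ soundness argument is schematic in the formulae plugged into it. For the five reduction axioms, validity in every D\L L-model is precisely the content of Proposition \ref{valid_schema}, which has already been proved. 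Hence all axioms are valid.

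Next I would handle the inference rules. \textbf{D\L} extends \textbf{B\L}$^+$, so the available rules are $(\text{R}_{\text{MP}})$, $(\text{R}_\text{B})$ and $(\text{R}_\text{G})$. By Lemma \ref{gb_admissible_rules} each of these preserves validity: if the premises are valid in all D\L L-models, so is the conclusion. (One should again observe that Lemma \ref{gb_admissible_rules} is phrased for \textbf{D\L L}$^+$-formulae but its proof — "obvious by definition" — works verbatim for \textbf{DD\L L}-formulae, since the truth conditions for $\&$, $\rightarrow$, $B_a$ and $\succeq$ are unchanged and the extended valuation $V_s$ is still a well-defined function.) Combining the base cases and the inductive step: if $\vdash_{\textbf{D\L}}\varphi$ via a derivation $\varphi_1,\dots,\varphi_n=\varphi$, then an easy induction on $k$ shows each $\varphi_k$ is valid in all D\L L-models — either it is an axiom (valid by the above) or it is obtained from earlier valid formulae by a validity-preserving rule. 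Therefore $\varphi$ is valid in all D\L L-models.

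The only point requiring a little care — and the place I would expect a referee to push — is the implicit claim that the extended valuation function $V_s^{\mathfrak{M}}$ is genuinely well-defined on all of \textbf{DD\L L}, i.e. that the recursion defining $V_s([\varphi\succeq g]\psi)$ terminates. This is not entirely obvious because the update model $\mathfrak{M}^{\varphi\succeq g}$ is another D\L L-model and $V^{\mathfrak{M}^{\varphi\succeq g}}(\psi)$ again invokes the same clause on any announcement subformulae of $\psi$. The standard resolution is to define a complexity measure on \textbf{DD\L L}-formulae under which, in every clause, the arguments have strictly smaller complexity — in particular $[\varphi\succeq g]\psi$ must count as more complex than $\psi$ even though $\psi$ may be evaluated in a different (smaller) model — and to note that each reduction axiom in Proposition \ref{valid_schema} pushes the announcement operator strictly inward in this measure. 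Since Proposition \ref{valid_schema} is already taken as proved, I would simply remark that well-definedness follows from the same measure, and that modulo this the soundness proof is the routine induction sketched above. If one wants to be fully self-contained, a one-line definition of the measure (e.g.\ counting nested announcements with multiplicative weight, or using the translation defined in the next subsection to reduce everything to \textbf{B\L}$^+$) can be inserted, but it is not the mathematical heart of the statement.
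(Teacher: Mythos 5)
Your proposal is correct and follows essentially the same route as the paper, whose entire proof is the one-line observation that soundness is a direct consequence of Proposition \ref{valid_schema} (together with the already-established soundness of $\textbf{B\L}^+$ and the validity-preserving rules). Your additional remarks on substitution instances of $\textbf{B\L}^+$-tautologies and on the well-definedness of $V_s$ on \textbf{DD\L L} address genuine points the paper leaves implicit, but they do not change the underlying argument.
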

\begin{proof}
	It is a direct conclusion of Proposition \ref{valid_schema}.
\end{proof}

In order to prove the completeness theorem, we follow Ditmarsch's approach for public announcement in \cite{delDitmarsch}.

\begin{Definition}
	Let $\varphi, \psi$ and $\chi$ be \textbf{DD\L L}-formulae.
	A \textit{translation} between languages \textbf{DD\L L} and \textbf{D\L L} is a function $t:\textbf{DD\L L}\rightarrow \textbf{D\L L}$, which is defined recursively as follows:
	
	\begin{align*}
		& \forall p \in \mathcal{P} \quad t(p) = p,& & t(\neg \varphi) = \neg t(\varphi), \\
		& t(\varphi \,\&\, \psi) = t(\varphi) \,\&\, t(\psi), & & t(\varphi \rightarrow \psi) = t(\neg (\varphi \,\&\, \neg \psi)),\\
		&  t(B_a \varphi) = B_a t(\varphi), & & t([\varphi\succeq g]p) = t(\varphi \succeq g \rightarrow p), \\
		& t([\varphi\succeq g]\neg \psi) = t(\varphi \succeq g \rightarrow \neg [\varphi \succeq g]\psi), & & t([\varphi \succeq g](\psi \,\&\,\chi)) = t([\varphi \succeq \psi] \,\&\, [\varphi \succeq g]\chi), \\
		& t ([\varphi \succeq g] (\psi \rightarrow \chi)) = t ([\varphi \succeq g]\neg ( \psi \,\&\, \neg \chi))&
		& t([\varphi \succeq g] B_a \psi) = t(\varphi\succeq g \rightarrow B_a [\varphi \succeq g]\psi).
	\end{align*}
	The \textit{complexity} of a \textbf{DD\L L}-formula  is defined using the function $c:\textbf{DD\L L} \rightarrow \mathbb{N}$ as follows:
	\begin{align*}
		& c(p)=1, && c(\neg \varphi) = 1 + c(\varphi), && c(\varphi\succeq g)= 1+c(\varphi),\\
		&  c(\varphi \,\&\, \psi) = 1 + \max\{c(\varphi), c(\psi)\},&& c(\varphi\rightarrow \psi)= 3+\max\{c(\varphi), c(\psi)\}, && c(B_a \varphi) = 1 + c(\varphi)\\
		&  c([\varphi \succeq g]\psi) = (5+c(\varphi))c(\psi)
	\end{align*}
\end{Definition}

\begin{Lemma} \label{complexity}
	For all $\varphi, \psi$ and $\chi$ in \textbf{DD\L L} we have:
	\begin{enumerate}
		\item $c(\psi)\geq c(\varphi)$ if $\varphi$ is sub-formula of $\psi$.
		\item $c([\varphi\succeq g]p) > c((\varphi\succeq g )\rightarrow p)$
		\item $c([\varphi\succeq g]\neg \psi) > c((\varphi\succeq g) \rightarrow \neg [\varphi \succeq g]\psi)$
		\item $c([\varphi \succeq g](\psi \,\&\, \chi)) > c([\varphi \succeq g]\psi \,\&\, [\varphi \succeq g]\chi)$
		\item $c([\varphi \succeq g](\psi \rightarrow \chi)) > c([\varphi \succeq g]\neg (\psi \,\&\, \neg \chi))$
		\item $c([\varphi \succeq g]B_a \psi) > c(\varphi \succeq g \rightarrow B_a [\varphi \succeq g]\psi)$
	\end{enumerate}
\end{Lemma}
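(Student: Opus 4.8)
The plan is to verify each of the six inequalities directly from the definition of the complexity function $c$, treating item (1) as a preliminary that is handled by an easy structural induction. For item (1), I would induct on the structure of $\psi$: in every clause of the definition of $c$, the value assigned to $\psi$ is strictly larger than (or, in the product clause for $[\varphi\succeq g]\chi$, at least as large as) the value of each immediate subformula, since each constituent contributes additively with a positive constant or sits inside a $\max$, and the product $(5+c(\varphi))c(\chi)$ with $c(\varphi),c(\chi)\ge 1$ dominates both $c(\varphi)$ and $c(\chi)$; transitivity of $\ge$ then extends this from immediate subformulas to arbitrary subformulas.

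For items (2)--(6) I would simply expand both sides using the clauses for $c$ and compare. Write $x = c(\varphi\succeq g) = 1 + c(\varphi)$ throughout (so $x\ge 2$), and abbreviate $a = c(\psi)$, $b = c(\chi)$. Then:
\begin{itemize}
	\item[(2)] $c([\varphi\succeq g]p) = (5+c(\varphi))\cdot 1 = x+4$, while $c((\varphi\succeq g)\to p) = 3 + \max\{x,1\} = x+3$; so $x+4 > x+3$.
	\item[(3)] $c([\varphi\succeq g]\neg\psi) = (5+c(\varphi))\,c(\neg\psi) = (x+4)(a+1)$, while the right side is $3 + \max\{\,x,\ 1 + (x+4)a\,\}$; since $x\ge 2$ and $a\ge 1$ one checks $(x+4)(a+1) = (x+4)a + (x+4) > (x+4)a + 4 \ge 1 + (x+4)a + 3$, and this also dominates $3+x$, giving the strict inequality.
	\item[(4)] $c([\varphi\succeq g](\psi\,\&\,\chi)) = (x+4)(1 + \max\{a,b\})$, while $c([\varphi\succeq g]\psi\,\&\,[\varphi\succeq g]\chi) = 1 + \max\{(x+4)a,\ (x+4)b\} = 1 + (x+4)\max\{a,b\}$; the difference is $(x+4) - 1 = x+3 > 0$.
	\item[(5)] Here the outer announcement is the same on both sides, so it reduces via (1)-type monotonicity to the plain inequality $c(\psi\to\chi) = 3 + \max\{a,b\} > 1 + \max\{a,\,1+b\}$ in the worst case, together with the fact that $c([\varphi\succeq g]\cdot)$ is strictly increasing in its second argument (immediate from the product clause, as $x+4\ge 6$); I would phrase it by noting $c(\neg(\psi\,\&\,\neg\chi)) = 1 + (1 + \max\{a,\,1+b\})$ and comparing with $3+\max\{a,b\}$.
	\item[(6)] $c([\varphi\succeq g]B_a\psi) = (x+4)(1+a)$, and $c(\varphi\succeq g \to B_a[\varphi\succeq g]\psi) = 3 + \max\{x,\ 1 + (x+4)a\}$; as in (3), $(x+4)(1+a) = (x+4)a + (x+4) > (x+4)a + 4 = 1 + (x+4)a + 3$, and this exceeds $3+x$ as well.
\end{itemize}

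I do not anticipate a genuine obstacle here; the only mild care needed is to keep track of which argument of a $\max$ is active and to use $c(\varphi)\ge 1$ (hence $x\ge 2$, $x+4\ge 6$) so that the constant slack coming from the factor $(5+c(\varphi))$ always beats the additive constants ($+3$ from the implication clause, $+1$ elsewhere) that appear on the right-hand sides. The design of the constants $5$ and $3$ in the definition of $c$ is precisely what makes every inequality go through with room to spare; it is worth remarking in the write-up that these inequalities are exactly what licenses the rewriting steps in the translation $t$ to terminate, which is the use to which Lemma~\ref{complexity} will be put in the completeness proof.
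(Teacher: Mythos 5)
Your computations for items (1)--(4) and (6) are correct and take essentially the same route as the paper's own proof: expand both sides of each inequality using the clauses of $c$ and let the factor $5+c(\varphi)\ge 6$ supply the slack. The problem is item (5), where your argument contains a step that fails. Writing $a=c(\psi)$ and $b=c(\chi)$, the comparison you reduce to is $c(\psi\to\chi)=3+\max\{a,b\}$ against $c(\neg(\psi\,\&\,\neg\chi))=2+\max\{a,1+b\}$, and you assert the first is strictly larger. Whenever $b\ge a$ this is an equality, not a strict inequality: $3+\max\{a,b\}=3+b$ while $2+\max\{a,1+b\}=2+(1+b)=3+b$. Concretely, for $\psi=p$ and $\chi=q$ one gets $c([\varphi\succeq g](p\to q))=(5+c(\varphi))\cdot 4=c([\varphi\succeq g]\neg(p\,\&\,\neg q))$, so the strict inequality of item (5) is simply false in this case, and your proposed step $3+\max\{a,b\}>2+\max\{a,1+b\}$ is exactly where the argument breaks.

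To be fair, the paper's own proof of item (5) has the same defect, concealed behind the opening phrase ``without loss of generality let $c(\psi)>c(\chi)$'': that assumption is genuinely harmless in item (4), where $\psi$ and $\chi$ enter both sides symmetrically, but it is not without loss of generality in item (5), where they enter $\neg(\psi\,\&\,\neg\chi)$ asymmetrically. So the root cause is the definition of the complexity measure rather than a slip in your arithmetic; nevertheless, as written your proposal does not establish item (5), and the gap matters downstream, because the induction on $c$ in Lemma \ref{translation_prove} needs a strict decrease to apply the induction hypothesis to $[\varphi\succeq g]\neg(\psi\,\&\,\neg\chi)$. A repair requires adjusting the constants in $c$ (for instance increasing the constant in the clause for $\to$, and compensating in the announcement clause so that item (2) still comes out strict) and then rechecking all six inequalities; if you keep the paper's definition of $c$, item (5) can only be proved with $\ge$ in place of $>$.
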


\begin{proof}
	\begin{itemize}
		\item[\textbf{(1):}] It is obvious by definition \ref{complexity}.
		\item [\textbf{(2):}] By definition we have $c([\varphi\succeq g]p) = (5+c(\varphi))c(p)$ which is equal to $5+c(\varphi)$. Also, we have $c(\varphi\succeq g \rightarrow p) = 3+\max\{c(\varphi\succeq g), c(p) \} = 4+c(\varphi)$. So it's obvious that $c([\varphi\succeq g]p) > c((\varphi \succeq p)\rightarrow p).$
		\item[\textbf{(3):}] We have:
		\begin{align}
			& c([\varphi\succeq g]\neg \psi)&& = (5+c(\varphi))(1+c(\psi)) \nonumber \\
			& && = 5 + c(\varphi) + 5c(\psi) + c(\varphi)c(\psi), \label{eq002}
		\end{align}
		\begin{align}
			& c((\varphi \succeq g)\rightarrow \neg [\varphi \succeq g]\psi)&& = 3+\max\{c(\varphi\succeq g), c(\neg[\varphi \succeq g]\psi)\} \nonumber\\
			&&&=4+ c([\varphi\succeq g]\psi)\nonumber\\
			&&&=4+(5+c(\varphi))c(\psi)\nonumber\\
			&&&=4+5c(\psi) + c(\varphi)c(\psi) \label{eq003}
		\end{align}
		It is easy to see that \ref{eq002} is greater than \ref{eq003}.
		\item[\textbf{(4):}] Without loss of generality let $c(\psi) > c(\chi)$. We have:
		\begin{align}
			&c([\varphi \succeq g](\psi\,\&\, \chi))&&=(5+c(\varphi))(1+\max\{c(\psi), c(\chi)\}) \nonumber \\
			& &&= 5 +c(\varphi) +  5c(\psi) + c(\varphi)c(\psi) \label{eq004}
		\end{align}
		\begin{align}
			&c([\varphi\succeq g]\psi \,\&\, [\varphi\succeq g]\chi) && =1+\max\{c([\varphi\succeq g]\psi), c([\varphi\succeq g]\chi)\}\nonumber\\
			&&&= 1+\max\{(5+c(\varphi))c(\psi), (5+c(\varphi)c(\chi))\}\nonumber\\
			&&&= 1+ 5c(\psi) + c(\varphi)c(\psi) \label{eq005}
		\end{align}
		From \ref{eq004}, \ref{eq005} it is easy to check that $c([\varphi\succeq g]\psi\,\&\, \chi) > c([\varphi\succeq g]\psi \,\&\, [\varphi\succeq g]\chi)$
		\item[\textbf{(5):}] Without loss of generality let $c(\psi)> c(\chi)$. We have:
		\begin{align}
			\nonumber& c([\varphi \succeq \chi] (\psi \rightarrow \chi)) && = (5+c(\varphi))(3+\max \{c(\psi), c(\chi)\})\\
			\nonumber& &&= (5+c(\varphi))(3+c(\psi)) \\
			& &&= 15 + 5c(\psi) + 3c(\varphi) + c(\varphi)c(\psi) \label{eq006}
		\end{align}
		\begin{align}
			\nonumber& c([\varphi \succeq g] \neg (\psi \,\&\, \neg \chi)) && = (5+c(\varphi))(2+\max \{c(\psi), 1+c(\chi)\})\\
			\nonumber & && = (5+c(\varphi))(2+c(\psi))\\
			& && = 10 + 5c(\psi) + 2c(\varphi) + c(\varphi)c(\psi) \label{eq007}
		\end{align}
		It can be seen that \ref{eq006} is greater than \ref{eq007}, and so the statement holds.
		\item[\textbf{(6):}] We have:
		\begin{align*}
			& c([\varphi \succeq g]B_a \psi) && = (5+c(\varphi))(1+c(\psi)) \\
			&&& = 5 + c(\varphi) + 5c(\psi) + c(\varphi)c(\psi) \\
			&&&> 4+ 5c(\psi) + c(\varphi)c(\psi) \\
			&&& >3+(1+ (5+c(\varphi))c(\psi))\\
			&&&\geq 3 + c(B_a [\varphi\succeq g]\psi)\\
			&&&= 3 + \max\{c(\varphi\succeq g), c(B_a[\varphi\succeq g]\psi)\}\\
			&&&=c((\varphi\succeq g)\rightarrow B_a[\varphi \succeq g]\psi)
		\end{align*}
	\end{itemize}
\end{proof}

\begin{Lemma} \label{translation_prove}
	For all formula $\varphi$ in \textbf{DD\L L} we have $\vdash_{\textbf{D\L}} \varphi \leftrightarrow t(\varphi)$.
\end{Lemma}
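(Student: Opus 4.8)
The plan is to prove $\vdash_{\textbf{D\L}}\varphi\leftrightarrow t(\varphi)$ by strong induction on the complexity $c(\varphi)$, working entirely inside the calculus \textbf{D\L}. Throughout I will use freely the \emph{replacement} property: whenever $\vdash_{\textbf{D\L}}\alpha\leftrightarrow\beta$, one may substitute $\alpha$ for $\beta$ under any connective, i.e.\ $\vdash_{\textbf{D\L}}\neg\alpha\leftrightarrow\neg\beta$, $\vdash_{\textbf{D\L}}(\alpha\,\&\,\gamma)\leftrightarrow(\beta\,\&\,\gamma)$, $\vdash_{\textbf{D\L}}B_a\alpha\leftrightarrow B_a\beta$, $\vdash_{\textbf{D\L}}(\alpha\succeq g)\leftrightarrow(\beta\succeq g)$, and so on. For the \L ukasiewicz connectives this is routine propositional reasoning from ($\text{\L}_D0$) and (R$_{\text{MP}}$); for $B_a$ it follows by applying (R$_\text{B}$) to $\alpha\rightarrow\beta$ and then \LB{1} with (R$_{\text{MP}}$); and for $\succeq$ one either derives it from (R$_\text{G}$), (\L$_g$0) and (\L$_g$1), or simply invokes the soundness and completeness of $\textbf{B\L}^+$ from Section~2, since the equivalence in question is a valid $\textbf{B\L}^+$-formula.

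The base case $\varphi=p$ is immediate: $t(p)=p$ and $\vdash_{\textbf{D\L}}p\leftrightarrow p$ is a \L ukasiewicz tautology, hence an instance of ($\text{\L}_D0$). For the inductive step I split into two families of cases. If the principal connective of $\varphi$ is not a public announcement — so $\varphi$ is one of $\neg\psi$, $\psi\,\&\,\chi$, $\psi\rightarrow\chi$, $B_a\psi$, $\psi\succeq g$ — then $t$ acts compositionally after at most one unfolding (for instance $t(\psi\rightarrow\chi)=\neg(t(\psi)\,\&\,\neg t(\chi))$), so I apply the induction hypothesis to the immediate subformulas, which are strictly $c$-smaller (immediate from the definition of $c$; cf.\ Lemma~\ref{complexity}(1)), and conclude by the replacement property, together in the implication case with the tautology $(\alpha\rightarrow\beta)\leftrightarrow\neg(\alpha\,\&\,\neg\beta)$. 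If $\varphi=[\varphi'\succeq g]\theta$ is an announcement, I split on the principal connective of $\theta$ and invoke the matching reduction axiom among ($\text{\L}_D1$)--($\text{\L}_D5$) with the corresponding strict inequality in Lemma~\ref{complexity}(2)--(6): the axiom gives $\vdash_{\textbf{D\L}}\varphi\leftrightarrow\varphi^{\flat}$ for the appropriate reduced formula $\varphi^{\flat}$, the inequality gives $c(\varphi^{\flat})<c(\varphi)$, so the induction hypothesis yields $\vdash_{\textbf{D\L}}\varphi^{\flat}\leftrightarrow t(\varphi^{\flat})$, and since $t(\varphi^{\flat})=t(\varphi)$ by the definition of $t$, transitivity of $\leftrightarrow$ closes the case. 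When $\theta$ is itself of the form $\psi\succeq h$ or a nested announcement, one first peels off its inner structure in the same manner, which is legitimate because $c([\varphi'\succeq g]\theta)>c(\theta)$.

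The arithmetic showing that each reduced formula has strictly smaller $c$-complexity is exactly Lemma~\ref{complexity} and can be cited rather than repeated. The genuine care is in (a) checking that \emph{every} appeal to the induction hypothesis is to a strictly $c$-smaller formula — this is precisely why the constant $3$ for $\rightarrow$ and the factor $5+c(\varphi)$ for $[\varphi\succeq g]\,\cdot\,$ are built into $c$ — and (b) pinning down the replacement steps connective by connective. I expect (b), and in particular replacement under $\succeq$ and under the announcement operator, to be the delicate point; where a direct syntactic argument is awkward, the cleanest route is to borrow the needed equivalence from the completeness of $\textbf{B\L}^+$.
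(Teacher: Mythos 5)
Your overall strategy coincides with the paper's: induction on $c(\varphi)$, with the compositional cases closed by congruence of the connectives with respect to provable equivalence and the announcement cases closed by the reduction axioms ($\text{\L}_D1$)--($\text{\L}_D5$) together with the strict inequalities of Lemma \ref{complexity}. For the cases the paper actually writes out ($\neg$, $\&$, $\rightarrow$, and $[\varphi\succeq g]$ applied to $p$, $\neg\psi$, $\psi\,\&\,\chi$, $\psi\rightarrow\chi$, $B_a\psi$) your compressed presentation is sound: the replacement property for $\neg$, $\&$, $\rightarrow$ is the same Hilbert-style reasoning the paper spells out, and for $B_a$ your route via (R$_\text{B}$) and (\L$_B$1) works.

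However, two of the steps you add do not go through as described. First, replacement under $\succeq$: from $\vdash_{\textbf{D\L}}\alpha\leftrightarrow\beta$ you cannot derive $\vdash_{\textbf{D\L}}(\alpha\succeq g)\leftrightarrow(\beta\succeq g)$ from (R$_\text{G}$), (\L$_g$0) and (\L$_g$1) --- none of these lets you transport a threshold across a provable implication --- and the fallback of citing completeness of $\textbf{B\L}^+$ is unavailable exactly where you need it, namely when $\alpha$ or $\beta$ contains an announcement (such formulas are not in $\textbf{D\L L}^+$, and appealing to completeness of $\textbf{D\L}$ instead would be circular, since this lemma is a step in that proof). Second, your treatment of $[\varphi'\succeq g]\theta$ with $\theta$ of the form $\psi\succeq h$ or itself an announcement: there is no reduction axiom among ($\text{\L}_D1$)--($\text{\L}_D5$) matching either shape, and ``peeling off the inner structure'' presupposes a congruence rule for the second argument of $[\varphi'\succeq g]\,\cdot\,$ that is likewise not among the available principles; the standard repair is an announcement-composition axiom and a reduction axiom for $\succeq$ under announcement, neither of which is in the system. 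To be fair, these are precisely the cases the paper itself silently omits (its translation $t$ is not even defined on $\varphi\succeq g$, on $[\varphi\succeq g](\psi\succeq h)$, or on nested announcements), so you have correctly located a genuine lacuna --- but your proposed patch does not close it, and you should either restrict the lemma to the fragment where $t$ is defined or add the missing axioms and translation clauses.
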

\begin{proof}
	Let $\varphi$ be a \textbf{DD\L L}-formula. The proof is performed by induction on $c(\varphi)$. The base case is when  $\varphi = p$, for some $p\in \mathcal{P}$. It is obvious that $\vdash_{\textbf{D\L}} p \leftrightarrow p$. So suppose that $c(\varphi)=n$, and the statement holds for all $\psi$ where $c(\psi)< n$.
	We have the following cases:
	\begin{itemize}
		\item \textbf{case $\neg\varphi$:} From Lemma \ref{complexity} item 1, we have $c(\neg \varphi)>c(\varphi)$. Thus by induction hypothesis we have $\vdash_{\textbf{D\L}} \varphi \leftrightarrow t(\varphi)$. Therefore by the following deduction we have $\vdash_{\textbf{D\L}} \neg \varphi \leftrightarrow \neg t(\varphi)$, and since $t(\neg \varphi)=\neg t( \varphi)$, hence $\vdash_{\textbf{D\L}}\neg \varphi \leftrightarrow t(\neg \varphi)$.
		$$
		\begin{array}{lll}
			(1) & (\varphi \rightarrow \psi) \,\&\, (\psi \rightarrow \varphi) & \text{induction hypothesis where } \psi = t(\varphi) \text{ and definition of } \leftrightarrow\\
			
			(2)& (\varphi \rightarrow \psi) \,\&\, (\psi \rightarrow \varphi) \rightarrow (\varphi \rightarrow \psi) & (A1) \\
			
			(3) & (\varphi \rightarrow \psi) \,\&\, (\psi \rightarrow \varphi) \rightarrow (\psi \rightarrow \varphi)& (A2), (A1) \\
			
			(4) & (\varphi \rightarrow \psi) & (1), (2), (R_{MP}) \\
			
			(5) & (\psi \rightarrow \varphi) & (1), (3), (R_{MP})\\
			
			(6) & (\varphi\rightarrow \psi)\rightarrow(\neg \psi \rightarrow \neg \varphi) & \text{theorem } (\L uka.)\\
			
			(7) & (\psi \rightarrow \varphi)\rightarrow (\neg \varphi \rightarrow \neg \psi) & \text{theorem } (\L uka.)\\
			
			(8) & (\neg \psi \rightarrow \neg \varphi) & (4), (6), (R_{MP}) \\
			
			(9) & (\neg \varphi \rightarrow \neg \psi) & (5), (7) , (R_{MP}) \\
			
			(10) & (\neg \psi \rightarrow \neg \varphi) \,\&\, (\neg \varphi \rightarrow \neg \psi) & (8), (9), \text{similar to \ref{and_Luka}}
		\end{array}
		$$
		
		\item \textbf{case $\varphi \,\&\, \psi$:} We have $c(\varphi \,\&\, \psi)\geq c(\varphi)$ and $c(\varphi \,\&\, \psi)\geq c(\psi)$ by Lemma \ref{complexity} item 1. Thus we have $\vdash_{\textbf{D\L}} \varphi \leftrightarrow t(\varphi)$ and $\vdash_{\textbf{D\L}} \psi \leftrightarrow t(\psi)$ from induction hypothesis. Now  $\vdash_{\textbf{D\L}} (\varphi \,\&\, \psi) \leftrightarrow (t(\varphi)\,\&\, t(\psi))$ by the following process:
		$$\begin{array}{lll}
			(1) & (\varphi\rightarrow t(\varphi))\,\&\,(t(\varphi)\rightarrow\varphi) & \text{ind. hypothesis}\\
			
			(2) & ((\psi\rightarrow t(\psi))\,\&\,(t(\psi)\rightarrow\psi)) & \text{ind. hypothesis}\\
			
			(3) & (\varphi\rightarrow t(\varphi))\,\&\,(t(\varphi)\rightarrow\varphi)\rightarrow (\varphi\rightarrow t(\varphi))& (A1) \\
			(4) & (\psi\rightarrow t(\psi))\,\&\,(t(\psi)\rightarrow\psi)\rightarrow (\psi\rightarrow t(\psi))& (A1) \\
			
			(5) & \varphi \rightarrow t(\varphi) & (1), (3), (R_{MP})\\
			
			(6) & \psi \rightarrow t(\psi) & (2), (4), (R_{MP}) \\
			
			(7) & ((\varphi \rightarrow t(\varphi)) \,\&\, (\psi \rightarrow t(\psi))) \rightarrow ((\varphi \,\&\, \psi)\rightarrow (t(\varphi)\,\&\, t(\psi)))& 
			(\L2)\\
			
			(8) & (\varphi \rightarrow t(\varphi)) \,\&\, (\psi \rightarrow t(\psi)) & (5), (6), \text{similar to \ref{and_Luka}}\\
			
			(9) &  (\varphi \,\&\, \psi)\rightarrow (t(\varphi)\,\&\, t(\psi)) & (7), (8), (R_{MP}) \\
			
			(10) & (t(\varphi)\,\&\, t(\psi)) \rightarrow (\varphi \,\&\, \psi) & \text{has a similar deduction as (9)}\\
			
			(11) & ((\varphi \,\&\, \psi)\rightarrow (t(\varphi)\,\&\, t(\psi))) \,\&\, ((t(\varphi)\,\&\, t(\psi)) \rightarrow (\varphi \,\&\, \psi))&  (9), (10), \text{similar to \ref{and_Luka}}
		\end{array}$$
		 Therefore, simply we obtain $\vdash_{\textbf{D\L}} \varphi\,\&\,\psi \leftrightarrow t(\varphi\,\&\, \psi)$ from translation definition.
		 \item \textbf{case $\varphi \rightarrow \psi$:} From Lemma \ref{complexity} we have $c(\varphi \rightarrow \psi) > c(\varphi)$ and $c(\varphi\rightarrow \psi)> c(\psi)$. So by induction hypothesis we have $\vdash_{\textbf{D\L}} \varphi \leftrightarrow t(\varphi)$ and $\vdash_{\textbf{D\L}} \psi \leftrightarrow t(\psi)$. By similar proof given in the previous case $\neg \varphi$, we have $\vdash_{\textbf{D\L}}\neg \psi \leftrightarrow \neg t(\psi)$. Thus we have:
		 $$\begin{array}{lll}
		 	(1) & (\varphi\rightarrow t(\varphi))\,\&\,(t(\varphi)\rightarrow\varphi) & \text{ind. hypothesis}\\
		 	
		 	(2) & ((\neg \psi\rightarrow \neg t(\psi))\,\&\,(\neg t(\psi)\rightarrow\neg \psi)) & \text{deduced from ind. hypothesis}\\
		 	
		 	(3) & (\varphi\rightarrow t(\varphi))\,\&\,(t(\varphi)\rightarrow\varphi)\rightarrow (\varphi\rightarrow t(\varphi))& (A1) \\
		 	(4) & (\neg \psi\rightarrow \neg t(\psi))\,\&\,(\neg t(\psi)\rightarrow\neg\psi)\rightarrow (\neg\psi\rightarrow \neg t(\psi))& (A1) \\
		 	
		 	(5) & \varphi \rightarrow t(\varphi) & (1), (3), (R_{MP})\\
		 	
		 	(6) & \neg \psi \rightarrow \neg t(\psi) & (2), (4), (R_{MP}) \\
		 	
		 	(7) & ((\varphi \rightarrow t(\varphi)) \,\&\, (\neg \psi \rightarrow \neg t(\psi))) \rightarrow ((\varphi \,\&\, \neg \psi)\rightarrow (t(\varphi)\,\&\, \neg t(\psi)))& 
		 	(\L2)\\
		 	
		 	(8) & (\varphi \rightarrow t(\varphi)) \,\&\, (\neg \psi \rightarrow \neg t(\psi)) & (5), (6), \text{similar to \ref{and_Luka}}\\
		 	
		 	(9) &  (\varphi \,\&\, \neg \psi)\rightarrow (t(\varphi)\,\&\, \neg t(\psi)) & (7), (8), (R_{MP}) \\
		 	
		 	(10) & (t(\varphi)\,\&\, \neg t(\psi)) \rightarrow (\varphi \,\&\, \neg \psi) & \text{has a similar deduction as (9)}\\
		 	
		 	(11) & ((\varphi \,\&\, \neg \psi)\rightarrow (t(\varphi)\,\&\, \neg t(\psi))) \rightarrow (\neg (t(\varphi)\,\&\, \neg t(\psi)) \rightarrow \neg (\varphi \,\&\, \neg \psi)) & \text{theorem} (\L uka.)\\
		 	
		 	(12) & ((t(\varphi)\,\&\, \neg t(\psi)) \rightarrow (\varphi \,\&\, \neg \psi)) \rightarrow (\neg (\varphi \,\&\, \neg \psi)\rightarrow \neg (t(\varphi)\,\&\, \neg t(\psi))) & \text{theorem} (\L uka.)\\
		 	
		 	(13) & \neg (t(\varphi)\,\&\, \neg t(\psi)) \rightarrow \neg (\varphi \,\&\, \neg \psi) & (9), (11), (R_{MP})  \\
		 	
		 	(14) & \neg (\varphi \,\&\, \neg \psi)\rightarrow \neg (t(\varphi)\,\&\, \neg t(\psi)) & (10), (12), (R_{MP}) \\
		 	
		 	(15) & (\neg (\varphi \,\&\, \neg \psi)\rightarrow \neg (t(\varphi)\,\&\, \neg t(\psi))) \,\&\, (\neg (t(\varphi)\,\&\, \neg t(\psi)) \rightarrow \neg (\varphi \,\&\, \neg \psi))&  (13), (14), \text{similar to \ref{and_Luka}}
		 \end{array}$$
	 Therefore we have $\vdash_{\textbf{D\L}} \neg (\varphi \,\&\, \neg \psi) \leftrightarrow \neg (t(\varphi) \,\&\, \neg t(\psi))$, and by definition we have 
	 $\vdash_{\textbf{D\L}} \neg (\varphi \,\&\, \neg \psi) \leftrightarrow t(\neg (\varphi\,\&\,\neg \psi))$. Meanwhile in \L ukasiewicz logic we have $\vdash_{\textbf{D\L}} (\varphi\rightarrow \psi)\leftrightarrow\neg (\varphi \,\&\, \neg \psi)$. So we have $\vdash_{\textbf{D\L}} (\varphi\rightarrow \psi)\leftrightarrow t(\neg (\varphi\,\&\,\neg \psi))$, and then by transition definition we obtain $\vdash_{\textbf{D\L}} (\varphi\rightarrow \psi)\leftrightarrow t(\psi \rightarrow \varphi)$ as desired.
		 
		\item \textbf{case $[\varphi \succeq g] p$:} We have $c([\varphi \succeq g]p)> c((\varphi \succeq g)\rightarrow p)$ from Lemma \ref{complexity} item 2. Thus from induction hypothesis we have $\vdash_{\textbf{D\L}} ((\varphi \succeq g)\rightarrow p)\leftrightarrow t((\varphi \succeq g)\rightarrow p)$, and  $\vdash_{\textbf{D\L}} ((\varphi\succeq g)\rightarrow p) \leftrightarrow t([\varphi \succeq g]p)$ is obtained by translation definition, then from (\L$_D$1) we have $\vdash_{\textbf{D\L}} ([\varphi \succeq g]p) \leftrightarrow t([\varphi \succeq g]p)$  straightforward.
		\item \textbf{case $[\varphi \succeq g]\neg \psi$:} From Lemma \ref{complexity} item 3 we have $c([\varphi \succeq g]\neg \psi) > c((\varphi \succeq g \rightarrow \neg [\varphi \succeq g]\psi))$. Thus by induction hypothesis we have $\vdash_{\textbf{D\L}} ((\varphi \succeq g)\rightarrow \neg [\varphi \succeq g]\psi)\leftrightarrow t((\varphi \succeq g)\rightarrow \neg [\varphi \succeq g]\psi)$. Hence  using translation definition we have $\vdash_{\textbf{D\L}} ((\varphi\succeq g)\rightarrow \neg [\varphi \succeq g]\psi) \leftrightarrow t([\varphi \succeq g]\neg \psi)$. Then by (\L$_D$2) we obtain $\vdash_{\textbf{D\L}} ([\varphi \succeq g]\neg \psi) \leftrightarrow t([\varphi \succeq g]\neg \psi)$. 
		\item \textbf{case $[\varphi \succeq g](\psi \,\&\,\chi)$:} We have $c([\varphi \succeq g](\psi \,\&\,\chi))\geq c([\varphi \succeq g]\psi \,\&\, [\varphi \succeq g]\chi)$ by Lemma \ref{complexity} item 4. Therefore by induction hypothesis we have $\vdash_{\textbf{D\L}} ([\varphi \succeq g]\psi \,\&\, [\varphi \succeq g]\chi)\leftrightarrow t([\varphi \succeq g]\psi \,\&\, [\varphi \succeq g]\chi)$, and similar to the previous cases using  (\L$_D$3) and translation definition, it is obtained that $\vdash_{\textbf{D\L}} [\varphi \succeq g]\psi \,\&\, \chi \leftrightarrow t([\varphi \succeq g]\psi \,\&\,\chi)$.
		\item \textbf{case $[\varphi \succeq g] (\psi \rightarrow \chi)$:} From Lemma \ref{complexity}, item 5, we have $c([\varphi \succeq g] (\psi \rightarrow \chi)) > c([\varphi \succeq g] \neg (\psi \,\&\, \neg \chi))$. So by induction hypothesis we have $\vdash_{\textbf{D\L}} [\varphi \succeq g] \neg (\psi \,\&\, \neg \chi) \leftrightarrow t([\varphi \succeq g]\neg (\psi \,\&\, \neg \chi))$. From (\L$_D$4) we obtain $\vdash_{\textbf{D\L}} [\varphi \succeq g] (\psi \rightarrow \chi) \leftrightarrow t( [\varphi \succeq g] \neg (\neg \psi \,\&\, \chi))$, so by translation definition of $t([\varphi\succeq g]\neg (\psi\,\&\, \neg \chi))$ we have $\vdash_{\textbf{D\L}} [\varphi \succeq g] \psi \rightarrow \chi \leftrightarrow t([\varphi \succeq g] (\psi \rightarrow \chi))$.
		\item \textbf{case $[\varphi \succeq g]B_a \psi$:} We have $c([\varphi\succeq g]B_a \psi)>c((\varphi\succeq g )\rightarrow B_a [\varphi\succeq g]\psi)$ by Lemma \ref{complexity} item 6. Thus by induction hypothesis we have $\vdash_{\textbf{D\L}}((\varphi\succeq g )\rightarrow B_a [\varphi\succeq g]\psi)\leftrightarrow t((\varphi\succeq g )\rightarrow B_a [\varphi\succeq g]\psi)$. Again similar to the precious cases  using (L$_D$5) and translation definition of $t([\varphi \succeq g]B_a \psi)$ we obtain $\vdash_{\textbf{D\L}} [\varphi \succeq g]B_a\psi \leftrightarrow t([\varphi\succeq g]B_a \psi)$.
	\end{itemize}
\end{proof}

\begin{Theorem}
	\textbf{(Completeness)}\label{thm_pa_compeleteness}
	
	For each formula $\varphi$ in \textbf{DD\L L}, $\vDash \varphi$ implies $\vdash_{\textbf{D\L}} \varphi$.
\end{Theorem}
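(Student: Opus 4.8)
The plan is to run the standard reduction argument for public announcement: every \textbf{DD\L L}-formula is provably equivalent in \textbf{D\L} to an announcement-free formula, and on announcement-free formulas we already have completeness from Section~2.

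First I would take an arbitrary \textbf{DD\L L}-formula $\varphi$ with $\vDash\varphi$ and apply Lemma~\ref{translation_prove} to obtain $\vdash_{\textbf{D\L}}\varphi\leftrightarrow t(\varphi)$. By the soundness theorem (Theorem~\ref{soundness}) this biconditional is valid, so $\vDash\varphi\leftrightarrow t(\varphi)$; since the \L ukasiewicz biconditional is valid exactly when the two sides have equal truth value in every pointed model, this together with $\vDash\varphi$ gives $\vDash t(\varphi)$.

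Next I would note that $t(\varphi)$ contains no public-announcement operator, i.e.\ it is a $\textbf{D\L L}^+$-formula: inspecting the recursive clauses defining $t$ together with the strict complexity decreases in Lemma~\ref{complexity} shows the recursion terminates and always peels off every $[\,\cdot\succeq\cdot\,]$. Because the semantic clauses for $\perp, p, \neg, \succeq, \&, \rightarrow, B_a$ in a D\L L-model are literally those used in the $\textbf{B\L}^+$-semantics, "$t(\varphi)$ valid in all D\L L-models" is the same statement as $\vDash t(\varphi)$ in the sense of Section~2. Hence the completeness theorem for $\textbf{B\L}^+$ proved at the end of Section~2 yields $\vdash_{\textbf{B\L}^+}t(\varphi)$, and since $\textbf{D\L}$ extends $\textbf{B\L}^+$ via axiom $(\text{\L}_D0)$ we get $\vdash_{\textbf{D\L}}t(\varphi)$. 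Finally, extracting the implication $t(\varphi)\rightarrow\varphi$ from the biconditional $\vdash_{\textbf{D\L}}\varphi\leftrightarrow t(\varphi)$ using $(A1)$, $(A2)$ and detaching with $(R_{\text{MP}})$ gives $\vdash_{\textbf{D\L}}\varphi$, as required.

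I expect the only delicate point to be the middle step: making precise that $t$ really outputs an announcement-free formula and that validity of such a formula over all D\L L-models coincides with the validity notion used in the $\textbf{B\L}^+$ completeness theorem. Once Lemma~\ref{complexity} (termination) and Lemma~\ref{translation_prove} (provable equivalence) are in hand, this is a routine verification, so the completeness proof itself is short.
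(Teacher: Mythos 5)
Your proposal is correct and follows essentially the same reduction argument as the paper: pass from $\vDash\varphi$ to $\vDash t(\varphi)$, invoke completeness of $\textbf{B\L}^+$ on the announcement-free translation, lift to $\textbf{D\L}$, and close with Lemma~\ref{translation_prove} and modus ponens. The only (harmless) variation is that you obtain $\vDash t(\varphi)$ by applying soundness to the provable biconditional $\varphi\leftrightarrow t(\varphi)$, whereas the paper reads it off directly from the semantic validity of the reduction schemata; your added remarks on termination of $t$ and on the agreement of the two validity notions only make explicit what the paper leaves implicit.
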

\begin{proof}
	Suppose that $\vDash \varphi$. By translation definitions we have $\vDash t(\varphi)$. Thus by completeness of \textbf{B\L}$^+$ we have $\vdash_{\textbf{B\L}^+} t(\varphi)$. Since \textbf{B\L}$^+$ is a fragment of \textbf{D\L}, we have $\vdash_{\textbf{D\L}} t(\varphi)$. Also, by Lemma \ref{translation_prove} we have $\vdash_{\textbf{D\L}} \varphi \leftrightarrow t(\varphi)$.  Then, by $\vdash_{\textbf{D\L}} \varphi \leftrightarrow t(\varphi)$ and $\vdash_{\textbf{D\L}} t(\varphi)$ we can conclude that $\vdash_{\textbf{D\L}}\varphi$.
\end{proof}

\section{Conclusion}

In this paper by defining a new operator $\succeq$, we proposed an extension $\textbf{B\L}^+$ of doxastic \L ukasiewicz logic with pseudo-classical belief defined in \cite{Dll2022}.  We showed that $\textbf{B\L}^+$ is sound and complete with respect to the class of D\L L$^{}$-models. Also, we equipped  the language of $\textbf{B\L}^+$  with a public announcement operator $[.\succeq g]$ and introduced a dynamic extension $\textbf{D\L }$ of doxastic \L ukasiewicz logic. 
	Finally, we proved that $\textbf{D\L }$ is  sound and complete.



\begin{thebibliography}{60}

\bibitem{Aucher2006}
G. Aucher,  {\em A Combined System for Update Logic
	and Belief Revision.} In Intelligent Agents and MultiAgent Systems, Intelligent Agents and Multi-Agent
Systems.3371(132), 1-18 (2006).

\bibitem{Aumann1999}
R. Aumann,  {\em Interactive epistemology I: Knowledge}. International Journal of Game Theory 28, 263-300 (1999).

\bibitem{Benevides2022}
M. Benevides, A. Madeira, M. A. Martins,
{\em Graded epistemic logic with public announcement},
Journal of Logical and Algebraic Methods in Programming,
Volume 125, (2022). \url{https://doi.org/10.1016/j.jlamp.2021.100732.}

\bibitem{Bonanno1999}
G. Bonanno, and P. Battigalli, {\em Recent results on belief, knowledge
	and the epistemic foundations of game theory.} Research in Economics 53(2), 149-225 (1999).

\bibitem{CMRR13}
X. Caicedo, G. Metcalfe, R. O. Rodriguez, and J. Rogger, {\em A finite model property for G\"odel modal logics},  in: L. Libkin, U. Kohlenbach, R. de Queiroz (Eds.) Logic, Language, Information, and Computation. WoLLIC (2013) 226-237. 	Lecture Notes in Computer Science, V. 8071. Berlin, Heidelberg, Springer.
\url{https://doi.org/10.1007/978-3-642-39992-3_20}


\bibitem{Caicedo2004}
X. Caicedo, R. O. Rodriguez, {\em A G\"odel modal logic}, in: Logic, Computability and Randomness, (2004) full text eprint arXiv:0903.2767.

\bibitem{CR10}
X. Caicedo, R. O. Rodriguez,  {\em Standart G\"odel Modal Logics},  Studia Logica \textbf{94},  (2010) 189-214.

\bibitem{CR15}
X. Caicedo, R. O. Rodriguez, {\em Bi-modal G\"odel Logic over [0,1]-valued Kripke frames}, Journal of Logic and Computation, 25 (2012) 37-55. \url{https://doi.org/10.1093/logcom/exs036}

\bibitem{EGL2020}
D. Dastgheib, H. Farahani, A. H. Sharafi, R. A. Borzooei, Some Epistemic Extensions of G\"odel logic, \url{https://arxiv.org/abs/1605.03828} (2020).

\bibitem{Dll2022}
D. Dastgheib, H. Farahani, {\em A Doxastic extension of \L ukasiewicz logic}. (2022) \url{https://arxiv.org/abs/2111.08564 }

\bibitem{Ditmarsch1999}
H. van Ditmarsch. \textit{The logic of knowledge games: showing a card}. In:
E. Postma and M. Gyssens, (eds) Proceedings of BNAIC 99, (1999).

\bibitem{Ditmarsch2001}
H. van Ditmarsch. {\em Knowledge games}. Bulletin of Economic Research 53:4 (2001)

\bibitem{Ditmarsch2003}
H. van Ditmarsch, W. van der Hoek, B. Kooi,  C\textit{oncurrent Dynamic Epistemic Logic}. In: Hendricks, V.F., Jørgensen, K.F., Pedersen, S.A. (eds) Knowledge Contributors. Synthese Library, vol 322 (2003).

\bibitem{delDitmarsch}
H. van Ditmarsch, W. van der Hoek, B. Kooi, {\em Dynamic Epistemic Logic}. Springer Netherlands. doi: 10.1007/978-1-4020-5839-4 (2008) XI,296

\bibitem{Ditmarsch2016}
H. van Ditmarsch, W. van der Hoek, B. Kooi, Louwe B. Kuijer, {\em Arbitrary arrow update logic}. Artificial Intelligence, Elsevier 242 (2017) 80-106

\bibitem{DiNola2021}
A. Di Nola, R. Grigolia, N. Mitskevich, G. Vitale, {\em Dynamic \L ukasiewicz logic and its application to immune system}. https://doi.org/10.1007/s00500-021-05955-3 Soft Computing (2021) 25:9773-9780

\bibitem{DiNola2020}
A. Di Nola, R. Grigolia, G. Vitale, {\em Dynamic \L ukasiewicz Logic and Dynamic MV-algebras}.
International Journal of Approximate Reasoning 124 (2020) 103–110

\bibitem{DiNola2022}
A. Di Nola, R. Grigolia, {\em Forensic Dynamic Lukasiewicz Logic}. Transactions on Fuzzy Sets and Systems (2022).

\bibitem{Gintis2009}
H. Gintis,  {\em The Bounds of Reason: Game Theory and the Unification of the Behavioral Sciences.} Princeton University Press (2009).

\bibitem{Fagin1995-reasoning-about-knowledge}
R. Fagin, J. Halpern, Y. Moses, and M. Vardi, {\em Reasoning about Knowledge.} The MIT Press (1995).

\bibitem{Fischer1978}
M. J. Fischer, {\em Propositional Dynamic Logic of Regular Program}. Journal of Computer and system science. 18, 194-211 (1979)

\bibitem{Halpern1990}
J. Halpern, and Y. Moses  {\em Knowledge and common knowledge in a distributed environment.} Journal of the ACM 37(3), 549-587 (1990).


\bibitem{Hajek1998}
P. H\"{a}jek, {\em Metamathematics of Fuzzy Logic} (Springer Netherlands, 1998).

\bibitem{Hansoul2006}
G. Hansoul, B. Teheux, {\em Completeness results for many-valued \L ukasiewicz modal systems and relational semantics}, (2006). \url{arXiv:math.LO/0612542vl}

\bibitem{Hansoul2013} 
G. Hansoul and B. Teheux, {\em Extending \L ukasiewicz Logics with a Modality: Algebraic approach to relational
	semantics}, Studia Logica, 101(3), (2013)  505-545.

\bibitem{Harel1979}
D. Harel, {\em First Order Dynamic Logic}, Part of the book series: Lecture Notes in Computer Science (LNCS, volume 68) (1979).

\bibitem{Harel2000}
D. Harel, D. Kozen, J. Tiuryn, {\em Dynamic Logic}. Cambridge: MIT Press, (2000).

\bibitem{Hintikka1962}
J. Hintikka, {\em Knowledge and belief. An introduction to the logic of the two notions.} Cornell University Press, Ithaca, N.Y., (1962), x 179 pp. Journal of Symbolic Logic, 29(3), 132-134.

\bibitem{Jing2014}
X. Jing,  X. Luo, and Y. Zhang,  {\em A Fuzzy Dynamic Belief Logic} System. Int. J. Intell. Syst., 29: 687-711 (2014). https://doi.org/10.1002/int.21652

\bibitem{Kozen1980}
D. Kozen, \textit{A representation theorem for models of $\ast$-free PDL}, In: de Bakker, J., van Leeuwen, J. (eds) Automata, Languages and Programming (1980).

\bibitem{Kuijer2017}
L. B. Kuijer, {\em Arbitrary Arrow Update Logic with Common Knowledge is neither {RE} nor co-RE}. In: Proceedings Sixteenth Conference on Theoretical Aspects of Rationality
and Knowledge, {TARK} (2017). \url{https://doi.org/10.4204/EPTCS.251.27}


\bibitem{Lindstrom1999-a}
S. Lindstr\"om and W. Rabinowicz. {\em Belief change for introspective agents} (1999).

\bibitem{Lindstrom1999-b}
S. Lindstr\"om and W. Rabinowicz. \textit{DDL unlimited: dynamic doxastic logic for
	introspective agents.} Erkenntnis, 50:353–385, (1999).


\bibitem{Mayer1995}
J.-J. Meyer,  and W. van der Hoek, {\em Epistemic Logic for AI and Computer Science.} Cambridge University Press (1995).

\bibitem{Negiri2023}
S. Negri,  and P.  Maffezioli,   {\em A proof theoretical perspective on public announcement logic.} Logic and Philosophy of Science (forthcoming).


\bibitem{Osterman1988}
P. Ostermann, {\em Many valued modal propositional calculi}. Mathematical Logic Quarterly. https://doi.org/10.1002/malq.19880340411 (1988) 




\bibitem{Parikh2002}
R. Parikh, {\em Social software.} Synthese 132, 187-211 (2002). 


\bibitem{Pischke2021}
N. Pischke, {A note on publice announcements in standard Godel modal logic}. (2021). \url{https://arxiv.org/abs/1707.05872}


\bibitem{Platzer2012}
A. Platzer, {\em Logics of Dynamical Systems,} 2012 27th Annual IEEE Symposium on Logic in Computer Science, pp. 13-24, doi: 10.1109/LICS.2012.13. (2012).

\bibitem{Platzer-essay2012}
A. Platzer, {\em Dynamic Logics of Dynamical Systems,} \url{https://arxiv.org/pdf/1205.4788} (2012).

\bibitem{Plaza1989}
J. Plaza, \textit{ Logics of Public Communications}.
In M. L. Emrich, M. S. Pfeifer, M. Hadzikadic, \& Z.W. Ras (eds), Proceedings of the fourth international symposium on methodologies for intelligent systems (1989) \url{https://doi.org/10.1007/s11229-007-9168-7}

\bibitem{Ramanujam2005}
R. Ramanujam, and S. Suresh, {\em Deciding knowledge properties of security protocols}. In Proceedings of Theoretical Aspects of Rationality and Knowledge, 219-235 (2005).

\bibitem{Samuelson2004}
L. Samuelson,  {\em Modeling knowledge in economic analysis. } Journal of Economic Literature 57, 367-403 (2004).


\bibitem{Santos2020}
Y. D. Santos, {\em A Four-Valued Dynamic Epistemic Logic.} Journal of Logic, Language and Information. https://doi.org/10.1007/s10849-020-09313-8 (2020) 29:451-489

\bibitem{Segerberg1977}
K. Segerberg, \textit{A completeness theorem in the modal logic of programs}, Not. Am. Math. Soc. 24 (6) (1977).

\bibitem{Segerberg1999-a}
K. Segerberg. \textit{Default logic as dynamic doxastic logic}. Erkenntnis, 50:333–352,
(1999).

\bibitem{Segerberg1999-b}
K. Segerberg. \textit{Two traditions in the logic of belief: bringing them together}. In:
H.J. Ohlbach and U. Reyle, (eds), Logic, Language, and Reasoning, (1999).

\bibitem{Teheux2014}
B. Teheux, {\em Propositional dynamic logic for searching games with errors}. Journal of Applied Logic (2014).

\bibitem{van_der_Hoek2003}
W. van der Hoek, and M. Wooldridge {\em Cooperation, knowledge, and
	time: Alternating-time temporal epistemic logic and its applications.} Studia Logica 75(1), 125-157 (2003).

\bibitem{product2015} 
A. Vidal, F. Esteva, L. Godo, On Modal Extensions of Product Fuzzy Logic,  Journal of Logic and Computation,  27(1),  (2017) 299-336. \url{https://doi.org/10.1093/logcom/exv046}

\bibitem{vonWright1951}
G. H. von Wright, {\em An Essay in model logic}. North Holland Publishing Company, Amsterdam (1951).
\end{thebibliography}
\end{document}